\def\doi#1{\href{https://doi.org/\detokenize{#1}}{\url{https://doi.org/\detokenize{#1}}}}
\setlist{nosep}
\algnewcommand{\LeftComment}[1]{\Statex \(\triangleright\) #1}
\newcommand{\blueLcomment}[1]{\LeftComment{ \blue{\small\texttt{#1}}}}
\newcommand{\blue}[1]{{\color{blue}{#1}}}
\newcommand{\calA}{\mathcal{A}}
\newcommand{\calO}{\ensuremath{{\mathcal O}}}
\newcommand{\calV}{\mathcal{V}}
\newcommand{\true}{\texttt{True}}
\newcommand{\false}{\texttt{False}}
\newcommand{\yes}{\textsc{Yes}}
\newcommand{\no}{\textsc{No}}
\newtheorem{observation}[theorem]{Observation}
\newtheorem{reduction-rule}{Reduction Rule}
\newtheorem{branching-rule}{Branching Rule}
\newtheorem{branching-rule1}{Branching Rule}
\newtheorem{reduction rule}{Reduction Rule}
\newtheorem{preprocessing rule}{Preprocessing Rule}
\newtheorem{branching rule}{Branching Rule}
\newtheorem{marking-scheme}{Marking Scheme}
\algrenewcommand\algorithmicrequire{\textbf{Input:}}
\algrenewcommand\algorithmicensure{\textbf{Task:}}
\newcommand{\defproblem}[3]{
  \vspace{1mm}
\noindent\fbox{
  \begin{minipage}{0.96\textwidth}
  \begin{tabular*}{\textwidth}{@{\extracolsep{\fill}}lr} #1 \\ \end{tabular*}
  {\bf{Input:}} #2  \\
  {\bf{Question:}} #3
  \end{minipage}
  }
  \vspace{1mm}
}
\begin{document}
\title{Revisiting Path Contraction and Cycle Contraction}

\author{R. Krithika~\thanks{The author is supported by SERB MATRICS grant number MTR/2022/000306.}~\inst{1} \and V. K. Kutty Malu \inst{1} \and
Prafullkumar Tale \inst{2}}
\authorrunning{Krithika, Kutty Malu and Tale}
%
\institute{Indian Institute of Technology Palakkad, Palakkad, India \\ \email{krithika@iitpkd.ac.in | 112104002@smail.iitpkd.ac.in}  
\\ 
\and Indian Institute of Science Education and Research Bhopal, Bhopal, India \\
\email{prafullkumar@iiserb.ac.in}}
\maketitle              
\begin{abstract}
The \textsc{Path Contraction} and \textsc{Cycle Contraction} problems take as 
input an undirected graph $G$ with $n$ vertices, $m$ edges and an integer 
$k$ and determine whether one can obtain a path or a cycle, respectively, by 
performing at most $k$ edge contractions in $G$. We revisit these \NP-complete problems and prove the following results.
\begin{itemize}
    \item \textsc{Path Contraction} admits an algorithm running in $\mathcal{O}^*(2^{k})$ time. This improves over the current algorithm known for the problem [Algorithmica 2014].
    \item \textsc{Cycle Contraction} admits an algorithm running in $\mathcal{O}^*((2 + 
\epsilon_{\ell})^k)$ time where $0 < \epsilon_{\ell}  \leq  0.5509$ is inversely proportional to $\ell = n - k$. 
\end{itemize}    
Central to these results is an algorithm for a general variant of \textsc{Path Contraction}, namely, \textsc{Path Contraction With Constrained Ends}. We also give an $\mathcal{O}^*(2.5191^n)$-time algorithm to solve the optimization version of \textsc{Cycle Contraction}. 

Next, we turn our attention to restricted graph classes and show the following results. 
\begin{itemize}
\item \textsc{Path Contraction} on planar graphs
admits a polynomial-time algorithm.
\item \textsc{Path Contraction} on chordal graphs
does not admit an algorithm running in time 
$\mathcal{O}(n^{2-\epsilon} \cdot 2^{o(tw)})$
for any $\epsilon > 0$,
unless the \textsf{Orthogonal Vectors Conjecture} fails.
Here, $tw$ is the treewidth of the input graph.
\end{itemize}
The second result complements the $\mathcal{O}(nm)$-time,
i.e., $\mathcal{O}(n^2 \cdot tw)$-time, algorithm known for the problem [Discret. Appl. Math. 2014].


\end{abstract}

\section{Introduction}
Graph editing problems have consistently been benchmark problems against which the power of new algorithmic tools and techniques are tested. 
Typical editing operations are vertex deletions, edge deletions, edge additions, edge subdivisions and edge contractions. In this work, we focus on editing a simple undirected graph by only performing edge contractions. Contracting an edge results in the addition of a new vertex adjacent to the neighbors of its endpoints followed by the deletion of the endpoints. One may view an edge contraction as ``merging'' its endpoints into one thereby making them indistinguishable. Any loops or parallel edges created in the process are deleted so that the graph remains simple. In graph contraction problems, given a graph $G$ and an integer $k$, the interest is in determining if $G$ can be modified into a graph that belongs to a specific family of graphs by a sequence of at most $k$ edge contractions. Most basic such graph families are paths and cycles. This brings us to the problems studied in this paper, namely, \textsc{Path Contraction} and \textsc{Cycle Contraction}. 


\defproblem{\textsc{Path Contraction}}{A connected undirected graph $G$ and an integer $k$.}{Can one contract at most $k$ edges in $G$ to obtain a path?}

\defproblem{\textsc{Cycle Contraction}}{A connected undirected graph $G$ and an integer $k$.}{Can one contract at most $k$ edges in $G$ to obtain a cycle?}

\noindent Let $P_{\ell}$ and $C_{\ell}$ denote the path and cycle respectively on $\ell$ vertices. When viewed as vertex sets rather than graphs, $P_{\ell}$ and $C_{\ell}$ are referred to as induced $\ell$-path and induced $\ell$-cycle, respectively. Early results by Brouwer and Veldman~\cite{BrouwerV87} show that \textsc{Path Contraction} and \textsc{Cycle Contraction} are \NP-complete\ even when the target graph (which is a path or a cycle) has four vertices (i.e., when $k=n-4$). However, \textsc{Path Contraction} and \textsc{Cycle Contraction} are polynomial-time solvable when the target graph is a $P_{3}$ or a $C_{3}$~\cite{BrouwerV87}. In fact, Brouwer and Veldman~\cite{BrouwerV87} showed that \textsc{Path Contraction} is \NP-complete\  when the target path has $\ell$ vertices for every $\ell \ge 4$. Later, Hammack~\cite{Hammack02} showed that \textsc{Cycle Contraction} is \NP-complete\ when the target cycle has $\ell$ vertices for every $\ell \ge 5$. These results imply that \textsc{Path Contraction} and \textsc{Cycle Contraction} are \NP-complete\ in general.

\textsc{Path Contraction} and \textsc{Tree Contraction} were the first graph contraction problems to be studied in the parameterized complexity framework. The most natural parameter is $k$ and Heggernes et al.~\cite{HeggernesHLLP14} showed that \textsc{Path Contraction} admits a kernel with at most $5k+3$ vertices and an \FPT\ algorithm running in $2^{k + \mathcal{O}(\sqrt{k} \log k)}$ time. They also showed that \textsc{Path Contraction} can be solved in $\mathcal{O}^*(2^n)$ time\footnote[1]{$\mathcal{O}^*(.)$ suppresses polynomial factors} by a simple algorithm -- color the graph with two colors, contract the connected components in the monochromatic subgraphs and check if the resulting graph is a path. Later, Li et al.~\cite{LI201720} improved the size of the kernel to at most $3k+4$ vertices. While there is no improvement to the $2^{k + \mathcal{O}(\sqrt{k} \log k)}$-time algorithm known till date, Agrawal et al.~\cite{AgrawalFLST20} gave an $\mathcal{O}^*(1.99987^n)$-time algorithm  breaking the $\mathcal{O}^*(2^n)$ barrier. 


Our approach to \textsc{Path Contraction} has a different style when compared to these results. We focus on the following variant of \textsc{Path Contraction}. Refer to Section~\ref{sec:prelims} for the definition of witness structure. 

\defproblem{\textsc{Path Contraction With Constrained Ends}}{A connected graph $G$, two disjoint subsets $X,Y \subseteq V(G)$ and an integer $k$.}{Can one contract at most $k$ edges in $G$ to obtain a path with witness structure $(W_1,\ldots,W_\ell)$ such that $X \subseteq W_1$ and $Y \subseteq W_\ell$?}


\noindent This problem is \NP-hard even if $\ell=2$ as it generalizes the \textsc{2-Disjoint Connected Subgraphs} problem which is known to be \NP-hard~\cite{HofPW09}. Our first result is an algorithm for \textsc{Path Contraction With Constrained Ends} using a dynamic programming formulation that crucially uses the notions of \textit{potential $k$-witness sets} (Definition \ref{def:k-potential-witness-set}) and \textit{potential $k$-prefix sets} (Definition \ref{def:k-nice-subset}). It is inspired by the exact exponential-time algorithm by Agrawal et al.~\cite{AgrawalFLST20} for \textsc{Path Contraction} and an \FPT\ algorithm by Saurabh et al.~\cite{SaurabhST20} for \textsc{Grid Contraction}. 

\begin{restatable}[]{thm}{pathEndsfptalgo}
\label{thm:path-ends-fpt-algo}
\textsc{Path Contraction With Constrained Ends} admits an algorithm
running in time $\mathcal{O}^*(2^{k-|X|-|Y|})$.
\end{restatable}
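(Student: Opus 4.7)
The plan is to solve \textsc{Path Contraction With Constrained Ends} by dynamic programming over the potential $k$-prefix sets of Definition~\ref{def:k-nice-subset}, with transitions that append a potential $k$-witness set as specified by Definition~\ref{def:k-potential-witness-set}. For each potential $k$-prefix set $P$ together with a designated subset $T \subseteq P$ serving as the tentative last witness set, I would maintain a boolean $f[P,T]$ that is true exactly when $G[P]$ admits a partition $W_1 \uplus \cdots \uplus W_i$ forming a valid witness prefix of some path contraction of $G$ with $X \subseteq W_1$, $W_i = T$, and contraction cost $|P|-i$. The final answer is then $\bigvee_T f[V(G),T]$ taken over potential $k$-witness sets $T$ that contain $Y$.

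I would first prove two structural lemmas that underpin the recurrence. The forward lemma asserts that any yes-certificate $(W_1, \ldots, W_\ell)$ induces a nested chain $\emptyset \subsetneq W_1 \subsetneq W_1 \cup W_2 \subsetneq \cdots \subsetneq V(G)$ of potential $k$-prefix sets in which each $W_i$ is a potential $k$-witness set; this reduces to verifying the connectivity and excess-budget conditions from the two definitions. The backward lemma justifies the transition
\[
f[P,T] \;=\; \bigvee_{T'} f[P \setminus T,\, T'],
\]
where $T'$ ranges over potential $k$-witness sets $T' \subseteq P \setminus T$ adjacent to $T$ and such that no vertex of $V(G) \setminus P$ is adjacent to $P \setminus T$ in $G$, ensuring that glueing $T$ onto any valid prefix ending in $T'$ yields again a valid prefix. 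The base case $f[W_1,W_1]$ is declared true whenever $W_1$ is a potential $k$-witness set containing $X$ with $G - W_1$ connected and enough remaining budget to accommodate a last witness set containing $Y$.

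The crux of the running time is bounding the number of visited states $(P,T)$ by $\mathcal{O}^*(2^{k-|X|-|Y|})$. The key observation is that, among the at most $k$ non-representative (contracted) vertices of any yes-certificate, $|X|-1$ are forced inside $W_1$ and $|Y|-1$ inside $W_\ell$, leaving only $k - |X| - |Y| + \mathcal{O}(1)$ truly free binary decisions about where along the path the remaining excess vertices sit. Following the template of the grid-contraction algorithm of Saurabh et al.~\cite{SaurabhST20} and the enumeration argument of Agrawal et al.~\cite{AgrawalFLST20}, I would encode each potential $k$-prefix set by a short bitstring of that length, where the structural requirements (connectivity of $G[V \setminus P]$, a single attachment through $T$, and each witness set being connected) ensure that each bit records a genuine binary choice. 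Per-state work reduces to polynomial-time checks of these adjacency and connectivity conditions, yielding the claimed $\mathcal{O}^*(2^{k-|X|-|Y|})$ bound.

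The step I expect to be the main obstacle is precisely this counting argument, since a naive enumeration of $(P,T)$ admits $\binom{n}{k}$ choices for $P$ alone, vastly exceeding $2^{k-|X|-|Y|}$. The necessary saving must come from the rigidity of potential $k$-prefix sets: the combination of each witness set being connected, the complement attaching to $P$ only through $T$, and no edges between non-adjacent witness sets should force $(P,T)$ to be determined by $k-|X|-|Y|+\mathcal{O}(1)$ bits once $X$ and $Y$ are fixed. Making this rigidity argument precise, while simultaneously ensuring that transitions can be enumerated in polynomial time (e.g.\ via a canonical ordering of excess vertices akin to the one in~\cite{AgrawalFLST20,SaurabhST20}), will be the technical heart of the proof.
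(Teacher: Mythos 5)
Your skeleton matches the paper's: a table indexed by a potential $k$-prefix set together with its last witness set, with transitions that append a new connected witness set absorbing the prefix's outer neighborhood. But the proof has a genuine gap exactly where you flag it, and the mechanism you sketch for closing it (encoding each prefix set by a bitstring of length $k-|X|-|Y|+\mathcal{O}(1)$ via some ``rigidity'') is not the one that works and is not substantiated. The paper's resolution is different and more concrete: prefix sets are never enumerated directly. By Definition~\ref{def:k-potential-witness-set}, a potential $k$-witness set $W$ is a connected subset of $V(G)\setminus(X\cup Y)$ such that $G-W$ has exactly two components $C_X\ni X$ and $C_Y\ni Y$; hence each prefix set is $C_X\cup W$ or $C_Y\cup W$ and is \emph{reconstructed} from $W$ plus one bit. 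The counting then reduces to counting the witness sets themselves, which satisfy $|W|+|N(W)\setminus(X\cup Y)|\le k+5-|X|-|Y|$ --- this follows from Observation~\ref{obs:k-potential-witness-set-prop}(2), i.e.\ from the fact that $W$ and its neighborhood live inside three consecutive witness sets while $X$ and $Y$ occupy the two end sets, and it is only valid when $\ell\ge 6$, which is why the paper disposes of $\ell\le 5$ separately by brute-force $2$-coloring (Lemma~\ref{lem:P4algo}); your proposal omits this case split. Connected sets with bounded size-plus-neighborhood are then enumerated in $\mathcal{O}^*(2^{|W|+|N(W)\setminus(X\cup Y)|})$ time by a standard two-way branching (Lemma~\ref{lem:conn-sets}), giving $\mathcal{O}^*(2^{k-|X|-|Y|})$ states. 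Your intuition that ``$|X|-1$ contractions are forced inside $W_1$'' is not where the $-|X|-|Y|$ saving comes from; it comes from excluding $X\cup Y$ from both $W$ and the counted neighborhood in the cardinality bound above.

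Two further points would still need repair even granting the state count. First, the total transition work is not polynomial per state in any obvious sense: enumerating the candidate sets to append (your $T'$, the paper's attachments $A$) costs $\mathcal{O}^*(2^{|A|+|N(A)\setminus S|-|N(S)|})$ per state, and the product with the $2^{|W|+|N(W)|}$ state count must be bounded globally; the paper does this via the four-consecutive-witness-sets inequality (Observation~\ref{obs:k-potential-witness-set-prop}(3) and Equation~\ref{eq:expand-cardinality-bound}), which caps the combined exponent at $k+6-|X|-|Y|$. Second, your recurrence $f[P,T]=\bigvee_{T'}f[P\setminus T,T']$ is missing the condition that $N(T)\cap(P\setminus T)\subseteq T'$ (otherwise $T$ could be adjacent to an earlier witness set, breaking the path structure), and your terminal step quantifies over potential $k$-witness sets ``containing $Y$,'' which contradicts their definition; the paper instead meets in the middle, combining a prefix from the $X$-side and one from the $Y$-side that share the same last witness set with compatible budgets.
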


\noindent Observe that \textsc{Path Contraction With Constrained Ends} when $X=\emptyset$ and $Y=\emptyset$ is \textsc{Path Contraction}. Therefore, we obtain an algorithm for \textsc{Path Contraction}.

\begin{corollary}
\label{cor:path-conn-algo}
\textsc{Path Contraction} admits an algorithm
running in time $\mathcal{O}^*(2^{k})$.
\end{corollary}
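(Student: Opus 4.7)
The plan is to derive the corollary as an immediate specialization of Theorem \ref{thm:path-ends-fpt-algo}. The key observation, already noted in the remark preceding the corollary, is that \textsc{Path Contraction} coincides with the instance of \textsc{Path Contraction With Constrained Ends} obtained by setting $X = \emptyset$ and $Y = \emptyset$: when no vertices are required to lie in the first or last witness set, the question reduces to whether $G$ admits \emph{any} contraction of at most $k$ edges to a path.

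To make this rigorous, I would verify the equivalence of instances in both directions. If $(G,k)$ is a yes-instance of \textsc{Path Contraction}, then there is a witness structure $(W_1,\ldots,W_\ell)$ of a path obtained by contracting at most $k$ edges of $G$; the containment conditions $\emptyset \subseteq W_1$ and $\emptyset \subseteq W_\ell$ hold vacuously, so $(G,\emptyset,\emptyset,k)$ is a yes-instance of \textsc{Path Contraction With Constrained Ends}. Conversely, any witness structure certifying a yes-answer to the constrained problem with $X=Y=\emptyset$ is a fortiori a valid path contraction of $G$, so $(G,k)$ is a yes-instance of \textsc{Path Contraction}.

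Given this equivalence, my proof simply invokes the algorithm promised by Theorem \ref{thm:path-ends-fpt-algo} on input $(G,\emptyset,\emptyset,k)$. Its running time is $\mathcal{O}^*(2^{k-|X|-|Y|}) = \mathcal{O}^*(2^{k-0-0}) = \mathcal{O}^*(2^k)$, which matches the claimed bound. There is no genuine obstacle at this step; all the difficulty has been absorbed into proving the more general theorem, and the corollary is essentially bookkeeping that records the $X=Y=\emptyset$ specialization together with its running time.
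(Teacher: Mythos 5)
Your proposal is correct and matches the paper's own argument exactly: the paper also obtains this corollary by observing that \textsc{Path Contraction} is \textsc{Path Contraction With Constrained Ends} with $X=Y=\emptyset$ and invoking Theorem~\ref{thm:path-ends-fpt-algo}, yielding the $\mathcal{O}^*(2^{k})$ bound. Your added verification of the two-way equivalence of instances is harmless extra bookkeeping.
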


\noindent This is the first improvement for \textsc{Path Contraction} since its fixed-parameter tractability was established by Heggernes et al.~\cite{HeggernesHLLP14}. 

Our original motivation for focusing on \textsc{Path Contraction With Constrained Ends} was to understand its applicability in designing faster algorithms for \textsc{Cycle Contraction} since deleting one vertex from a cycle results in a path whose endpoints are neighbours of the deleted vertex. As cycles have treewidth $2$, one may use Courcelle’s theorem to show the existence of \FPT\ algorithms for \textsc{Cycle Contraction} (see \cite[Chapter 7]{CyganFKLMPPS15} for related definitions and arguments). However, the resulting algorithm has huge exponential factors in the running time. Belmonte et al.~\cite{BelmonteGHP14} showed that \textsc{Cycle Contraction} admits a kernel with at most $6k+6$ vertices which was later improved to $5k+4$ by Sheng and Sun~\cite{SHENG201914}. This immediately results in a single-exponential time algorithm, i.e., an $\mathcal{O}^*(c^k)$-time algorithm for some constant $c>200$, for \textsc{Cycle Contraction} -- simply color the graph with two or three colors, contract the connected components in the monochromatic subgraphs and check if the resultant graph is a cycle. We describe a faster algorithm using the algorithm for \textsc{Path Contraction With Constrained Ends} as a subroutine.

\begin{restatable}[]{thm}{fptalgocycle}
\label{thm:fpt-algo-cycle}
\textsc{Cycle Contraction} admits an algorithm running in time
$\mathcal{O}^*((2 + \epsilon_{\ell})^k)$ where 
$0 < \epsilon_{\ell}  \leq  0.5509$ and $\epsilon_\ell$ is inversely
proportional to $\ell = n - k$. 
\end{restatable}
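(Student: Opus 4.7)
The plan is to reduce \textsc{Cycle Contraction} to many instances of \textsc{Path Contraction With Constrained Ends} by guessing a single witness set of the target cycle and ``cutting it open'' into a path.

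Suppose $(G,k)$ is a yes-instance with witness structure $(W_1,\dots,W_\ell)$ for some contraction to $C_\ell$. Fix any vertex $v\in V(G)$ and assume, after re-indexing, that $v\in W_1$. Then $G-W_1$ contracts in exactly $k-(|W_1|-1)$ steps to $P_{\ell-1}$ with witness structure $(W_2,\dots,W_\ell)$; moreover the sets $X:=N_G(W_1)\cap W_2$ and $Y:=N_G(W_1)\cap W_\ell$ are disjoint and non-empty, and they constrain the two endpoints of this path. So if one could guess the triple $(W_1,X,Y)$, a single call to Theorem~\ref{thm:path-ends-fpt-algo} would finish the job in time $\mathcal{O}^*(2^{k-|W_1|+1-|X|-|Y|})$.

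My algorithm iterates over all candidate triples $(W_1',X',Y')$ where $W_1'$ is a connected subset of $V(G)$ containing $v$ and $X',Y'\subseteq N_G(W_1')\setminus W_1'$ are disjoint and non-empty, and invokes Theorem~\ref{thm:path-ends-fpt-algo} on each instance $(G-W_1',X',Y',k-|W_1'|+1)$. The enumeration is organised as a branching procedure: starting from $W_1'=\{v\}$ with $X'=Y'=\emptyset$, I repeatedly pick an unclassified vertex $u$ on the boundary of $W_1'$ and branch on whether $u$ enters $W_1'$, enters $X'$, enters $Y'$, or is discarded. Correctness in both directions is immediate from the structural observation above.

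The hard part will be the running-time analysis. Each leaf of the branching tree with triple $(W_1',X',Y')$ pays $\mathcal{O}^*(2^{k-|W_1'|+1-|X'|-|Y'|})$ for the \textsc{PCCE} call on top of the up to $4$-way branching incurred by the enumeration. The key combinatorial savings come from the constraint that the $\ell-3$ witness sets other than $W_1,W_2,W_\ell$ each contribute at least one vertex outside $W_1\cup X\cup Y$, so $|W_1|+|X|+|Y|\le n-(\ell-3)$ in any yes-instance. I plan to control the total work via a measure-and-conquer style weighting, where placing a vertex into $W_1'$ consumes one of the $k$ contractions, while placing it into $X'$ or $Y'$ costs one branch but saves a factor of $2$ in the \textsc{PCCE} base. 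For small $\ell$, the extra enumeration cost is absorbed by the large value of $k$; for large $\ell$, the witness set $W_1$ is typically a singleton so the branching tree is shallow. Tuning the weights to interpolate between these two regimes is precisely what yields a base $2+\epsilon_\ell$ with $\epsilon_\ell$ inversely proportional to $\ell$.
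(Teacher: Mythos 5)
Your reduction to \textsc{Path Contraction With Constrained Ends} is the right high-level move, and it matches the paper's: guess one witness set $W$ of the cycle together with the partition $N(W)=X\uplus Y$ into its two cycle-neighbours, delete $W$, and ask for a path contraction of $G-W$ whose end witness sets contain $X$ and $Y$. The gap is in \emph{which} witness set you cut at, and this is fatal to the running-time bound. You cut at the witness set containing an arbitrary fixed vertex $v$. That set, together with its neighbourhood, has no useful size bound: the only constraint you derive is $|W_1|+|X|+|Y|\le n-(\ell-3)=k+3$. Your enumeration of triples is (at best) a $3$-way branching on the boundary, so there are up to $3^{|W_1|+|X|+|Y|}$ candidate triples with given sizes, each paying $\mathcal{O}^*(2^{\,k-|W_1|+1-|X|-|Y|})$ for the \textsc{PCCE} call; the total is $\mathcal{O}^*\bigl(2^{k}\cdot(3/2)^{|W_1|+|X|+|Y|}\bigr)$, which with only the bound $|W_1|+|X|+|Y|\le k+3$ is $\mathcal{O}^*(3^k)$ --- worse than the trivial $\mathcal{O}^*(c^k)$ from the kernel, and nowhere near $(2+\epsilon_\ell)^k$. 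Your heuristic ``for large $\ell$ the witness set $W_1$ is typically a singleton'' is exactly the statement that fails: averaging guarantees that \emph{some} witness sets are small, not the one containing your chosen $v$, which can still have size $k+1$. The measure-and-conquer weighting you propose cannot repair this, because the expensive triples are genuinely present in the enumeration and each genuinely requires a \textsc{PCCE} call.

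The missing idea is an averaging argument over \emph{consecutive triples} of witness sets: if $G$ is $k$-contractible to $C_\ell$ with $\ell\ge 4$, then some three consecutive witness sets $W_{i},W_{i+1},W_{i+2}$ satisfy $|W_i|+|W_{i+1}|+|W_{i+2}|\le 3+\tfrac{3k}{\ell}$ (sum the negations around the cycle to get a contradiction with $\sum_i|W_i|\le k+\ell$). One then enumerates candidates for the \emph{middle} set $W_{i+1}$ of such a triple together with the partition $N(W_{i+1})=X\uplus Y$; since $N(W_{i+1})\subseteq W_i\cup W_{i+2}$, the quantity governing the $3$-way enumeration, namely $|W_{i+1}|+|N(W_{i+1})|$, is at most $3+\tfrac{3k}{\ell}$. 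This caps the total cost at $\mathcal{O}^*\bigl(3^{3k/\ell}\cdot 2^{\,k-3k/\ell}\bigr)=\mathcal{O}^*\bigl((2\cdot(3/2)^{3/\ell})^{k}\bigr)$, which for $\ell\ge 5$ (the cases $\ell\in\{3,4\}$ being handled separately by a polynomial-time check and a $2$-colouring argument, respectively) gives base at most $2\cdot 1.5^{3/5}\le 2.5509$ and yields $\epsilon_\ell$ inversely proportional to $\ell$. Two further small points: your fourth ``discard'' branch is not only unnecessary but unsound --- every vertex of $N(W_1')$ must land in $X'\cup Y'$, otherwise a positive \textsc{PCCE} answer may place a neighbour of $W_1'$ in an internal path witness set, so gluing $W_1'$ back does not produce a cycle witness structure; and you should treat $\ell=4$ separately, since there the deleted witness set has both path ends as its neighbours and the generic bound degrades.
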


\noindent This also implies an exact exponential-time algorithm for \textsc{Cycle Contraction} that runs in $\mathcal{O}^*(2.5509^n)$ time. We further improve the running time to $\mathcal{O}^*(2.5191^n)$ (Theorem \ref{thm:exact-algo-cycle}).

\textsc{Cycle Contraction} is related to an important graph parameter, namely, \textit{cyclicity}. The cyclicity of a graph is the largest integer $\ell$ such that the graph is contractible to $C_\ell$. Computing cyclicity is \NP-hard\ in general~\cite{BrouwerV87}, however, Hammack~\cite{Hammack99} showed that the cyclicity of planar graphs can be computed in polynomial time. Therefore, \textsc{Cycle Contraction} is polynomial-time solvable on planar graphs. To the best of our knowledge, the status of \textsc{Path Contraction} on planar graphs is open. 

\begin{restatable}[]{thm}{pathplanar}
\label{thm:path-planar}
\textsc{Path Contraction} on planar graphs admits a polynomial-time algorithm.
\end{restatable}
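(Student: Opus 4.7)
The plan is to compute $p(G) := \max\{\ell \,:\, G \text{ contracts to } P_\ell\}$ in polynomial time; the instance is \yes\ iff $n - p(G) \leq k$. My approach reduces planar \textsc{Path Contraction} to polynomially many instances of planar \textsc{Cycle Contraction}, each solvable in polynomial time via Hammack's algorithm~\cite{Hammack99}.

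I would enumerate all ordered pairs $(u, v)$ of distinct vertices of $G$ as candidates for vertices lying in $W_1$ and $W_\ell$, respectively. For each pair, I first test in linear time whether $G + uv$ is planar; this is equivalent to the existence of a planar embedding of $G$ in which $u$ and $v$ lie on a common face. When the test passes, I construct an auxiliary planar graph $\widetilde{G}_{u,v}$ by inserting a short path $u \text{--} w_1 \text{--} w_2 \text{--} w_3 \text{--} v$ through three new vertices placed inside a common face of $u$ and $v$. Any path witness $(W_1, \ldots, W_\ell)$ of $G$ with $u \in W_1$ and $v \in W_\ell$ yields a cycle witness $(W_1, \ldots, W_\ell, \{w_3\}, \{w_2\}, \{w_1\})$ of $\widetilde{G}_{u,v}$ of length $\ell + 3$, giving $\mathrm{cyc}(\widetilde{G}_{u,v}) \geq p(G;u,v) + 3$.

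The main obstacle is the matching upper bound $\mathrm{cyc}(\widetilde{G}_{u,v}) \leq p(G;u,v) + 3$. The length-three gadget is essential: a single new vertex adjacent to both $u$ and $v$ admits spurious cycle witnesses in which it is absorbed into a bag containing $u$ or $v$, as illustrated by $G = C_4$ with $u, v$ adjacent, where the single-vertex construction admits a cycle witness of length $4$ while $p(G;u,v) + 1 = 3$. With three intermediate vertices $w_1, w_2, w_3$, each of degree two and having no neighbour in $V(G) \setminus \{u, v\}$, a case analysis on their placement in the cycle bags shows that any merging of two or more of them strictly shortens the cycle witness, so the maximum is attained only when each $w_i$ occupies a singleton bag, in which case removing them from the cycle witness leaves a valid path witness of $G$ of length $\mathrm{cyc}(\widetilde{G}_{u,v}) - 3$ with $u$ and $v$ at the ends.

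Finally, restricting to pairs with $G + uv$ planar is harmless: for any optimal path witness $(W_1, \ldots, W_\ell)$ of $G$, a straightforward argument using the planar embedding of $P_\ell$ together with a fictitious closing edge shows that $G$ admits a planar embedding in which some $u' \in W_1$ and $v' \in W_\ell$ lie on a common face, so a good pair realizing the optimum always exists. Iterating over the $O(n^2)$ pairs $(u, v)$, running the planarity test, constructing $\widetilde{G}_{u,v}$, and invoking Hammack's algorithm on each yields $p(G)$ in polynomial time, and hence decides the problem.
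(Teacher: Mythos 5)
Your reduction has the right overall shape --- it is essentially the paper's reduction to planar \textsc{Cycle Contraction} via a path gadget joining a candidate end-pair --- but the key upper bound $\mathrm{cyc}(\widetilde{G}_{u,v}) \leq p(G;u,v)+3$ is false, and the claimed case analysis (``any merging of two or more of the $w_i$ strictly shortens the cycle witness'') does not hold. The problem is not merging the $w_i$ among themselves but absorbing all of them into a single bag \emph{together with both $u$ and $v$}. Concretely, take $G=C_m$ with $u,v$ adjacent on the cycle. Then $p(G;u,v)=2$ (indeed $p(C_m)=2$, since any partition of a cycle into $\ell\geq 3$ connected arcs contracts to $C_\ell$, never to a path). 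Yet in $\widetilde{G}_{u,v}$ the witness structure consisting of the bag $\{u,v,w_1,w_2,w_3\}$ together with the $m-2$ remaining cycle vertices as singletons is a valid $C_{m-1}$-witness structure: the big bag is connected, it is adjacent only to the two singletons neighbouring $u$ and $v$ on the long arc, and the $w_i$ contribute no other adjacencies. So $\mathrm{cyc}(\widetilde{G}_{u,v})\geq m-1 \gg p(G;u,v)+3=5$. Your algorithm would therefore report $p(C_{10})\geq 6$ and answer \yes\ on the instance $(C_{10},4)$, which is a \no-instance. The underlying phenomenon is that once $u$ and $v$ share a bag, the cycle witness of $\widetilde{G}_{u,v}$ no longer encodes a path witness of $G$ with $u$ and $v$ at opposite ends; a constant-length gadget cannot prevent this, because absorbing it is cheap.

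The paper's proof avoids exactly this trap by tying the gadget length to the budget: it attaches a path of $k+1$ new internal vertices and asks whether the resulting graph is \emph{$k$-contractible} to a cycle (equivalently, whether its cyclicity is at least $n+1$), rather than computing the unrestricted cyclicity of a gadgeted graph. Absorbing the whole gadget into one bag then already costs more than $k$ contractions, and a separate connectivity argument (using $N(Z)=\{x,y\}$ and $|V(G)|\geq k+2$) rules out partial absorption with $x$ and $y$ in a common bag. If you want to repair your maximization-based variant, you would similarly need the gadget length to depend on $n$ or $k$ so that spurious witness structures are provably suboptimal, at which point you are back to the budgeted formulation. Your other two ingredients --- restricting to pairs with $G+uv$ planar, and the existence of a good pair on a common face for an optimal witness structure --- are fine and match the paper.
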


\noindent This result uses the \textsc{Cycle Contraction} algorithm on planar graphs as a subroutine to obtain a polynomial-time algorithm for \textsc{Path Contraction}. Note that \textsc{Tree Contraction} is \NP-hard on planar graphs~\cite{AsanoH83}. Hence, planar graphs is an example graph class where \textsc{Path Contraction} is easier than \textsc{Tree Contraction}. 

On chordal graphs, while \textsc{Cycle Contraction} is trivially solvable (in linear time), the status of \textsc{Path Contraction} is interesting. Heggernes et al.~\cite{HeggernesHLP14} proved that \textsc{Path Contraction} is polynomial-time solvable on chordal graphs by giving an $\mathcal{O}(nm)$-time algorithm where $m$ is the number of edges in the input graph. This is also an $\mathcal{O}(n^2 \cdot tw)$-time algorithm where $tw$ is the treewidth of $G$. We prove that this algorithm is optimal in the following sense. 

\begin{restatable}[]{thm}{chordallb}
\label{chordallb}
Unless the \textsf{Orthogonal Vectors Conjecture} fails,
\textsc{Path Contraction} on chordal graphs does not admit
an algorithm running in time 
$\mathcal{O}(n^{2-\epsilon} \cdot 2^{o(tw)})$
for any $\epsilon > 0$.
\end{restatable}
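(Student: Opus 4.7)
The plan is to establish the lower bound by a fine-grained reduction from \textsc{Orthogonal Vectors} (OV). Recall that \textsf{OVC} asserts that given sets $A, B \subseteq \{0,1\}^d$ with $|A|=|B|=N$ and $d = \omega(\log N)$, deciding whether there exist $a \in A$, $b \in B$ with $\langle a,b\rangle = 0$ cannot be done in $N^{2-\epsilon}\cdot 2^{o(d)}$ time for any $\epsilon>0$. It therefore suffices to construct, in polynomial time, a chordal graph $G$ together with an integer $k$ such that $G$ has $n = O(N\cdot \mathrm{poly}(d))$ vertices, treewidth $O(d)$, and $G$ can be contracted to a path using at most $k$ edge contractions iff an orthogonal pair exists. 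Then any hypothetical $O(n^{2-\epsilon}\cdot 2^{o(tw)})$ algorithm would yield an $O(N^{2-\epsilon'}\cdot 2^{o(d)})$ algorithm for OV, contradicting \textsf{OVC}.

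The construction I have in mind consists of three parts glued along a common set of $d$ ``coordinate vertices'' $c_1,\dots,c_d$ that form a clique $C$. On the left, a \emph{selector gadget} $H_A$ consists of $N$ vector subgadgets $G_{a_1},\dots,G_{a_N}$ attached to $C$ and arranged along a ``spine'' so that in any valid path contraction of $G$, exactly one subgadget $G_{a^\star}$ is ``selected''\,---\,meaning that its vertices are spread across several consecutive witness sets\,---\,while all other subgadgets are absorbed into the spine's witness sets. A symmetric selector $H_B$ is built on the other side of $C$. Each subgadget $G_{a_i}$ encodes, through its interaction with $C$, which coordinates of $a_i$ are $1$, and thereby controls to which side each $c_j$ may be routed in the witness structure. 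Because $C$ is a clique and the subgadgets attach to $C$ only through cliques, chordality and treewidth $O(d)$ can be preserved, and $n = O(Nd)$ suffices.

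Correctness will follow by translating the path contraction semantics into witness structures (Section~\ref{sec:prelims}): a valid path contraction induces a choice of $a^\star, b^\star$ together with a routing of each $c_j$ into a witness set containing vertices from either the $A$-side or the $B$-side, and this routing is consistent exactly when for every coordinate $j$ at least one of $a^\star[j], b^\star[j]$ is zero, i.e., when $\langle a^\star, b^\star\rangle = 0$. The forward direction exhibits an explicit witness partition $(W_1,\dots,W_\ell)$ obtained from an orthogonal pair; the reverse direction uses an exchange argument on witness sets that span both the $A$-side and the $B$-side of $C$, together with an accounting of contractions used inside each subgadget to pin down the selected indices.

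The main obstacle is engineering the selector and verification gadgetry so that chordality is globally preserved while rigidly enforcing both the ``select exactly one vector from each side'' behavior and the ``orthogonality check on the coordinate vertices'' condition. Long induced cycles spanning the two spines and $C$ must be broken; this likely requires building the selectors from interval-graph components whose clique tree is narrow, and designating specific vertices as forced path endpoints so that the contraction has no freedom in how the spines are processed. Once each subgadget is controlled at $O(d)$ vertices and the natural tree decomposition along spine\,--\,$C$\,--\,spine has bags of size $O(d)$ throughout, the vertex and treewidth bounds drop out and the reduction is complete.
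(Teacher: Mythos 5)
Your proposal has the right frame --- a fine-grained reduction from \textsc{Orthogonal Vectors} producing a chordal graph with $n = \mathcal{O}(N\cdot\mathrm{poly}(d))$ vertices and treewidth $\mathcal{O}(d)$, so that a hypothetical $\mathcal{O}(n^{2-\epsilon}\cdot 2^{o(tw)})$ algorithm would contradict the conjecture --- but the heart of the proof is missing. Everything hinges on the selector and verification gadgets, and you explicitly defer their design (``the main obstacle is engineering the selector and verification gadgetry''). As written there is no construction to verify: no description of the vector subgadgets, no proof that exactly one subgadget per side is ``selected'' in every valid contraction, and no argument that the two spines plus the central clique $C$ can be made chordal. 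That last point is a real danger rather than a formality: a spine of $N$ subgadgets each touching $C$ creates many long induced cycles, and breaking them all while keeping the treewidth at $\mathcal{O}(d)$ and preserving the intended rigidity of the witness structure is exactly the part that needs a proof.

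The paper's reduction shows that none of this machinery is needed, because one can target a $P_4$, i.e., take $k=|V(G)|-4$. Build a split graph: an independent set $X\cup Y$ of $2N$ vector-vertices, a clique $Z=\{z_X,z_1,\dots,z_d,z_Y\}$, with $z_X$ (resp.\ $z_Y$) adjacent to all of $X$ (resp.\ $Y$), and $x$ adjacent to $z_i$ iff $x(i)=1$ (similarly for $y$). In any $P_4$-witness structure the clique $Z$ must lie in the two middle witness sets and the two end witness sets are forced to be singletons $\{x_i\}\subseteq X$ and $\{y_j\}\subseteq Y$; such a structure exists iff $x_i$ and $y_j$ have no common neighbour $z_\ell$ (which would create an edge between non-consecutive witness sets), i.e., iff they are orthogonal. ``Selection'' of the pair is thus enforced for free by the shape of a $P_4$-witness structure on a split graph, chordality is immediate, and the treewidth is $d+2=\mathcal{O}(\log N)$. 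If you want to salvage your own route you would have to supply the full gadget constructions, the chordality and treewidth verification, and the exchange argument; in its current form the proposal is a plan, not a proof.
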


This result specifically shows that the $\mathcal{O}(n^2 \cdot tw)$-time algorithm given by Heggernes et al. is optimal -- if one wishes to reduce the quadratic dependency on $n$, an exponential overhead in terms of treewidth is inevitable. Heggernes et al.~\cite{HeggernesHLP14} also showed that \textsc{Tree Contraction} on chordal graphs admits an $\mathcal{O}(n+m)$-time algorithm. This result combined with Theorem~\ref{chordallb} implies that chordal graphs is an example graph class where \textsc{Tree Contraction} is easier than \textsc{Path Contraction}. 

\section{Preliminaries}
\label{sec:prelims}
For $k \in \mathbb{N}$, $[k]$ denotes the set $\{1,2,\ldots, k\}$. We refer to the book by Diestel~\cite{Diestel12} for standard graph-theoretic definitions and terminology not defined here. For an undirected graph $G$, sets $V(G)$ and $E(G)$ denote its set of vertices and edges, respectively.  For a vertex $v$, $N_G(v)$ denotes the set of its neighbors in $G$. We omit the subscript in the notation for neighborhood if the graph under consideration is clear. For a set $S \subseteq V(G)$, $\overline{S}$ denotes the set $V(G) \setminus S$. For a set $S \subseteq V(G)$, we denote the graph obtained by deleting $S$ from $G$ by $G - S$ and the subgraph of $G$ induced on the set $S$ by $G[S]$. For two subsets $S_1, S_2 \subseteq V(G)$, let $E(S_1,S_2)$ denote the set of edges with with one endpoint in $S_1$ and other in $S_2$ and we say that $S_1$ and $S_2$ are adjacent if $E(S_1,S_2)\neq \emptyset$. A set $S$ of $V(G)$ is said to be a \textit{connected set} if $G[S]$ is connected. We state the following notions of special connected sets given in \cite{AgrawalFLST20}.




\begin{definition}[$(Q,\alpha,\beta)$-connected Set]
\label{def:Q-a-b}
For integers $\alpha,\beta \in \mathbb{N}$ and a non-empty set $Q\subseteq V(G)$, a connected set $X$ is a $(Q,\alpha,\beta)$-connected set if $Q \subseteq X$, $|X| = \alpha$, and $|N(X)|= \beta$.
\end{definition}

\begin{definition}[$(\alpha,\beta)$-connected Set]
\label{def:a-b}
For integers $\alpha,\beta \in \mathbb{N}$, a connected set $X$ is called an $(\alpha,\beta)$-connected set if $|X| = \alpha$ and $|N(X)| = \beta$. 
\end{definition}

Next, we state a result on the enumeration of $(Q,\alpha,\beta)$-connected sets and $(\alpha,\beta)$-connected sets which follows from Lemma~$3.1$ and Lemma~$3.2$ of~\cite{FominV12} (also see Lemma~$2.4$ and Lemma~$2.5$ of~\cite{AgrawalFLST20}). 

\begin{proposition} 
\label{prop:ab-conn-sets} 
Given integers $\alpha,\beta \in \mathbb{N}$, the number of $(\alpha,\beta)$-connected sets in $G$ is at most $2^{\alpha + \beta} \cdot n$ and the set of all $(\alpha,\beta)$-connected sets can be enumerated in $\calO^*(2^{\alpha + \beta})$ time. Further, given a non-empty set $Q \subseteq V(G)$, the number of $(Q,\alpha,\beta)$-connected sets in $G$ is at most $2^{\alpha + \beta - |Q|}$ and the set of all $(Q,\alpha,\beta)$-connected sets can be enumerated in $\calO^*(2^{\alpha + \beta - |Q|})$ time.
\end{proposition}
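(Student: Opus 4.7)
The plan is to prove both bounds via a single branching enumeration procedure, following the Fomin--Villanger style argument for enumerating connected subgraphs with a bounded number of external neighbours. I will treat the $(Q,\alpha,\beta)$-case first, since the other bound follows by running that routine once per starting vertex.

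The routine would maintain a recursive state $(X,P)$ of two disjoint subsets of $V(G)$: $X$ is the current ``committed-in'' connected set (initialised to $Q$), and $P$ is a set of vertices already ``committed-out'', i.e., promised to lie in the final neighbourhood (initialised to $\emptyset$). At each step: if $|X| > \alpha$ or $|P| > \beta$, abort this branch; if $|X| = \alpha$, test whether $N(X) = P$ (equivalently $|N(X)| = \beta$) and, if so, output $X$; otherwise pick an arbitrary undecided vertex $u \in N(X)\setminus P$ and recurse on both $(X\cup\{u\},P)$ and $(X,P\cup\{u\})$. Connectivity of $X$ is preserved inductively because each newly added vertex lies in $N(X)$, and $Q \subseteq X$ is preserved by construction.

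Along any root-to-leaf path that produces an output, exactly $\alpha - |Q|$ steps are inclusion branches (to grow $X$ from $Q$ to size $\alpha$) and at most $\beta$ steps are exclusion branches (since branches with $|P| > \beta$ are pruned). Hence the binary recursion tree has at most $2^{\alpha+\beta-|Q|}$ leaves. Each $(Q,\alpha,\beta)$-connected set $X^{\star}$ is produced by exactly one root-to-leaf path, namely the one that includes each vertex of $X^{\star}\setminus Q$ when it first appears on the frontier and excludes each vertex of $N(X^{\star})$ analogously; this both establishes the counting bound and shows that all such sets are enumerated. With polynomial bookkeeping per node, the total running time is $\calO^{*}(2^{\alpha+\beta-|Q|})$.

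For the $(\alpha,\beta)$-connected set bound, I would invoke the above routine for each $v \in V(G)$ with $Q := \{v\}$ and concatenate the outputs (optionally deduplicating). Every $(\alpha,\beta)$-connected set contains at least one vertex, so it is produced in at least one of the $n$ runs. The total number of distinct sets is at most $n \cdot 2^{\alpha+\beta-1} \le n \cdot 2^{\alpha+\beta}$, and the enumeration runs in $\calO^{*}(2^{\alpha+\beta})$ time.

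The main subtlety will be pinning down the invariant that guarantees exactly one recursion path per output set and that no $(Q,\alpha,\beta)$-connected set is missed: one must check that the ``pick any undecided $u \in N(X)\setminus P$'' step is consistent with the target set $X^{\star}$, i.e., that $u \in X^{\star}$ or $u \in N(X^{\star})$ at that moment, which holds because at every stage $X \subseteq X^{\star}$ and $P \subseteq N(X^{\star})$ implies $N(X)\setminus P \subseteq X^{\star} \cup N(X^{\star})$. Once this invariant is set up cleanly, the two bounds fall out of the $2$-way branching count.
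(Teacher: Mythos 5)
Your proof is correct and is essentially the argument the paper relies on: the paper imports this proposition from Fomin and Villanger without reproving it, but the same include/exclude depth-bounded search tree (rooted at $Q$, or at a guessed vertex $v$, branching on an undecided frontier vertex) is exactly what the paper uses to prove its analogous Lemma~\ref{lem:conn-sets} and the three-way variant in Lemma~\ref{lem:ab-conn-sets-nbrs}. Your additional care about the invariant $X \subseteq X^{\star}$, $P \subseteq N(X^{\star})$ and the uniqueness of the accepting root-to-leaf path is sound and matches the standard justification.
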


\smallskip
\noindent \textbf{Edge contractions.} For a graph $G$ and an edge $e = uv \in E(G)$, $G/e$ denote the graph obtained from $G$ by contracting edge $e$. Formally, $V(G/e) = (V(G) \cup \{w\}) \backslash \{u, v\}$ and $E(G/e) = \{xy \mid x,y \in V(G) \setminus \{u, v\}, xy \in E(G)\} \cup \{wx |\ x \in N_G(u) \cup N_G(v)\}$ where $w$ is a new vertex not in $V(G)$. Observe that an edge contraction reduces the number of vertices in the graph by exactly one. For a set $F \subseteq E(G)$, $G/F$ denotes the graph obtained from $G$ sequentially contracting the edges in $F$; $G/F$ is oblivious to the contraction sequence and two distinct contraction sequences result in the same graph. A graph $G$ is said to be \emph{contractible} to the graph $H$ if there is an onto function $\psi: V(G) \rightarrow V(H)$ such that the following properties hold; in that case, we say that $G$ is contractible to $H$ via mapping $\psi$.
\begin{itemize}
\item For any vertex $h \in V(H)$, $G[W(h)]$ is connected and not empty where $W(h) = \{v \in V(G) \mid \psi(v)= h\}$.
\item For any two vertices $h, h’ \in V(H)$, $W(h)$, $W(h’)$ are adjacent in $G$ if and only if $hh’ \in E(H)$.
\end{itemize}
For a vertex $h$ in $H$, set $W(h)$ is called a \emph{witness set} associated with/corresponding to $h$. We define $H$-\emph{witness structure} of $G$, denoted by $\mathcal{W}$, as a collection of all witness sets; formally $\mathcal{W}=\{W(h) \mid h \in V(H)\}$. Note that a witness structure $\mathcal{W}$ is a partition of $V(G)$. If $G$ has a $H$-witness structure, then $H$ can be obtained from $G$ by a sequence of edge contractions. For a fixed $H$-witness structure, let $F$ be the union of edges of spanning trees of all witness sets where by convention, the spanning tree of a singleton set is an empty set. In order to obtain $H$ from $G$ it suffices to contract edges in $F$. We say that $G$ is \emph{$k$-contractible} to $H$ if $|F|$ is at most $k$, i.e., $H$ can be obtained from $G$ by at most $k$ edge contractions. If $G$ is contractible to a $P_\ell$ or $C_\ell$, we denote the corresponding witness structure as a sequence $(W_1,\ldots,W_\ell)$ to emphasize the order of the witness sets. For a cycle witness structure $(W_1,\ldots,W_\ell)$, $W_1$ and $W_\ell$ are regarded as consecutive witness sets just like the other obvious consecutive witness sets.  

\smallskip \noindent \textbf{Parameterized complexity.} A \textit{parameterized problem} is a decision problem in which every instance $I$ is associated with a natural number $k$ called \textit{parameter}. A parameterized problem $\Pi$ is said to be \emph{fixed-parameter tractable} (\FPT) if every instance $(I,k)$ of $\Pi$ can be solved in $f(k)\cdot |I|^{\calO(1)}$ time where $f(\cdot)$ is some computable function whose value depends only on $k$. We say that two instances, $(I, k)$ and $(I’, k’)$, of a parameterized problem $\Pi$ are \emph{equivalent} if $(I, k) \in \Pi$ if and only if $(I’, k’) \in \Pi$. A parameterized problem $\Pi$ admits a kernel of size $g(k)$ (or $g(k)$-kernel) if there is a polynomial-time algorithm (called {\em kernelization algorithm}) which takes as an input $(I,k)$ of $\Pi$, and in time $|I|^{\calO(1)}$ returns an equivalent instance $(I’,k’)$ of $\Pi$ such that $|I’| + k’ \leq g(k)$ where $g(\cdot)$ is a computable function whose value depends only on $k$. For more details on parameterized complexity, we refer the reader to the book by Cygan et al.~\cite{CyganFKLMPPS15}.


\section{Path Contraction}
In this section, we describe an algorithm to solve \textsc{Path Contraction With Constrained Ends}. Consider an instance $(G,X,Y,k)$. Let $n$ denote the number of vertices in $G$.  We begin with the following observation.




\begin{observation}
\label{obs:sumofws-path}
If $G$ is $k$-contractible to $P_{\ell}$ with witness structure $(W_1,\ldots,W_\ell)$, then $\sum_{i=1}^{\ell} |W_i| \leq k + \ell$.  
\end{observation}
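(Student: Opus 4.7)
The plan is to prove this observation by a simple counting argument relating the sum of witness-set sizes, the number of vertices of $G$, and the number of contractions performed.

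First I would note that since the witness structure $(W_1,\ldots,W_\ell)$ forms a partition of $V(G)$, we have $\sum_{i=1}^{\ell} |W_i| = |V(G)| = n$. So it suffices to show that $n \le k + \ell$, equivalently that $k \ge n - \ell$.

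Next I would argue that at least $n - \ell$ edge contractions are needed to obtain $P_\ell$ from $G$. This follows from the fact that each edge contraction decreases the number of vertices by exactly one (as recorded in the edge-contraction definition in Section~\ref{sec:prelims}): starting from $n$ vertices and ending with $\ell$ vertices after contracting edges of a set $F$ yields $|F| = n - \ell$. Since by hypothesis $G$ is $k$-contractible to $P_\ell$, there is some such $F$ with $|F| \le k$, and therefore $k \ge n - \ell$. Combining with the first step gives $\sum_{i=1}^{\ell}|W_i| = n \le k + \ell$, as claimed.

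Alternatively, one could argue directly from the witness sets: each $G[W_i]$ is connected, so its spanning tree has exactly $|W_i| - 1$ edges, and the number of edges contracted to realize the witness structure is $\sum_{i=1}^{\ell}(|W_i| - 1) = \bigl(\sum_{i=1}^{\ell}|W_i|\bigr) - \ell$, which is at most $k$. There is no real obstacle here; the only thing to be careful about is simply citing the right consequence of the definition of edge contraction (that it strictly reduces $|V|$ by one, with no parallel edges or loops causing issues), which is already established in the preliminaries.
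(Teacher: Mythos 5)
Your proof is correct and matches the (implicit) argument the paper intends for this observation: the witness sets partition $V(G)$, and the number of contracted edges is $\sum_{i=1}^{\ell}(|W_i|-1) \le k$, which immediately gives the bound. Both of your variants are valid and essentially the same counting argument.
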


Observation \ref{obs:sumofws-path} immediately leads to a $\mathcal{O}^*(2^{k-|X|-|Y|})$-time algorithm for the problem when $\ell$ is a constant.
For a clearer analysis of the algorithm, we distinguish two cases:
when $\ell \le 5$, i.e., $n \le k + 5$ and otherwise.

\begin{lemma}
\label{lem:P4algo}
There is an algorithm that determines if $G$ is $k$-contractible to a path of at most five vertices with the first witness set containing $X$ and the last witness set containing $Y$ in $\mathcal{O}^*(2^{k-|X|-|Y|})$ time. 
\end{lemma}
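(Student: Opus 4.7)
The plan is to exploit Observation~\ref{obs:sumofws-path}: any $P_\ell$-witness structure of $G$ realising at most $k$ contractions satisfies $n \le k + \ell \le k + 5$, so if $n > k + 5$ the algorithm returns \no{} immediately. Otherwise $n - |X| - |Y| \le k + 5 - |X| - |Y|$, which is only a constant away from the target exponent $k - |X| - |Y|$. This already suggests brute-forcing the assignment of each ``free'' vertex of $V(G)\setminus(X\cup Y)$ to a colour class of the sought witness structure, using the familiar fact that a $P_\ell$-witness structure induces an alternating $2$-colouring of $V(G)$.

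Concretely, I would iterate over each $\ell \in \{2,3,4,5\}$ with $n \le k + \ell$, colour every vertex of $X$ with $1$, colour $Y$ with $1$ if $\ell$ is odd and with $2$ if $\ell$ is even (matching the parities of the endpoint indices $1$ and $\ell$ in an alternating $P_\ell$-colouring), and enumerate all $2^{n-|X|-|Y|}$ two-colourings of the remaining vertices. For each colouring, contract every connected component of each monochromatic subgraph to a single vertex, which takes polynomial time, and verify that the resulting graph is isomorphic to $P_\ell$ with the component containing $X$ at one endpoint and the component containing $Y$ at the other; if some $(\ell,\text{colouring})$-pair passes the test, return \yes{}, and otherwise return \no{}. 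Correctness is the standard colouring correspondence: from a solution $(W_1,\ldots,W_\ell)$, assigning colour $((i-1)\bmod 2)+1$ to $W_i$ makes each $W_i$ exactly a maximal monochromatic connected subgraph (two witness sets of the same colour must be separated by an intermediate witness set of the opposite colour), so the procedure recovers it; conversely, any colouring passing the test directly yields a valid $P_\ell$-witness structure with $X, Y$ at the endpoints and $n - \ell \le k$ contractions.

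The running time bound is the main thing to verify: the outer loop over $\ell$ contributes a factor of at most $4$, the inner loop over colourings contributes $2^{n-|X|-|Y|}$, and each iteration runs in polynomial time, for a total of $\mathcal{O}^*(2^{n-|X|-|Y|})=\mathcal{O}^*(2^{k+5-|X|-|Y|})=\mathcal{O}^*(2^{k-|X|-|Y|})$ once the constant $2^{5}$ is absorbed into $\mathcal{O}^*(\cdot)$. I do not expect any real obstacle; the only delicate bookkeeping is handling the case when $X$ or $Y$ is empty (where the corresponding endpoint check becomes vacuous) and being careful to reject, for a given $\ell$, colourings whose contraction does not have exactly $\ell$ vertices in the correct path shape.
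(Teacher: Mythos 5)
Your proposal is correct and follows essentially the same route as the paper's proof: reject when $n > k+5$, then enumerate the $\mathcal{O}(2^{k+5-|X|-|Y|})$ two-colourings in which $X$ and $Y$ are monochromatic (with equal or opposite colours according to the parity of $\ell$), contract monochromatic components, and check for the desired path. The only cosmetic difference is that you fix the colour of $Y$ per value of $\ell$ while the paper simply tries both the ``same colour'' and ``different colour'' cases; this changes nothing in the analysis.
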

\begin{proof}
If $G$ is $k$-contractible to a path of at most five vertices 
then $n \leq k+5$ by Observation~\ref{obs:sumofws-path}. Thus, if $n 
>k+5$, then we declare that $G$ is not $k$-contractible to a path on at 
most five vertices. Subsequently, we assume $n \leq k+5$. Suppose $G$ is $k$-contractible to a path $P_{\ell}$ for some $2 \le \ell \le 5$, and $(W_1,\dots, W_{\ell})$ is the corresponding path
witness structure. Consider a $2$-coloring of $V(G)$ such that every witness set is monochromatic and the adjacent witness sets get different colors.
In such a coloring, the first and the last witness set get the same color
when $\ell$ is odd and different colors when $\ell$ is even.
We consider these two cases seperately.

We first compute the set of all colorings of $V(G)$ using two colors such that $X$ and $Y$ are monochromatic and have the same color. 
Then, we iterate over this set, contract each of the connected 
components of the two monochromatic subgraphs into a single vertex and 
check if the resultant graph is a path with first witness set 
containing $X$ and the last witness set containing $Y$ and at most $k$ edges were contracted. 
We repeat this procedure with the set of all colorings of $V(G)$ using two colors such that $X$ and $Y$ are monochromatic but have different colors.
If no coloring results in the required $P_\ell$ with $\ell \leq 5$, then $G$ is 
not $k$-contractible to a path of at most five vertices with first witness set 
containing $X$ and the last witness set containing $Y$. Otherwise, we obtain 
a $P_\ell$ from $G$ using at most $k$ edge contractions with witness structure 
$(W_1,W_2,\ldots,W_\ell)$ 
where $\ell \leq 5$ and $X \subseteq W_1$, $Y 
\subseteq W_\ell$.  

As there are $\mathcal{O}(2^{k+5-|X|-|Y|}$ such colorings, the overall 
running time of this algorithm is $\mathcal{O}^*(2^{k-|X|-|Y|})$. \qed
\end{proof} 

Subsequently, we consider the case $\ell \geq 6$, i.e. $k \leq n-6$. 
We critically use the fact that any path witness structure of $G$ has at least $6$
witness sets to bound the cardinality of the closed neighborhood of 
a witness set.

\begin{observation}
\label{obs:k-potential-witness-set-prop}
Suppose $G$ is $k$-contractible to $P_\ell$ with $\ell \geq 6$
and $(W_1,W_2,\ldots,W_\ell)$ is the corresponding witness structure with 
$X \subseteq W_1$ and $Y \subseteq W_\ell$. 
\begin{enumerate}
\item For any $2 \le i \le \ell - 1$,
$G - W_i$ has two connected components
$C_X, C_Y$ containing $X$ and $Y$, respectively. 
\item  For any $2 \le i \le \ell - 1$,
$|(N(W_i) \cap W_{i-1}) \setminus X | +  |W_i| + 
|(N(W_i) \cap W_{i+1}) \setminus Y| \le k + 5 - |X|-|Y|$.
\item For any $2 \le i \le \ell - 2$, 
$|(N(W_i) \cap W_{i-1}) \setminus X| + |W_i| + |W_{i+1}| + |(N(W_{i+1}) \setminus W_i)\setminus Y| \le k + 6-|X|-|Y|$.
\end{enumerate}
\end{observation}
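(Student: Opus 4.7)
The plan is to derive all three statements from the structure of the path witness partition $(W_1, \ldots, W_\ell)$ together with the total size bound $\sum_{j=1}^\ell |W_j| \le k + \ell$ given by Observation~\ref{obs:sumofws-path}.

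For Part~1, I would argue purely structurally. Since $G$ contracts to $P_\ell$ via the given witness partition, edges of $G$ between distinct witness sets occur only for consecutive indices. Hence $G - W_i$ for $2 \le i \le \ell-1$ decomposes as the disjoint union of $G[W_1 \cup \cdots \cup W_{i-1}]$ and $G[W_{i+1} \cup \cdots \cup W_\ell]$; each of these is connected because every $W_j$ is connected and consecutive witness sets are adjacent in $G$. Since $X \subseteq W_1$ and $Y \subseteq W_\ell$, these are the desired components $C_X$ and $C_Y$.

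For Parts~2 and~3, the starting point is that $N(W_i) \subseteq W_{i-1} \cup W_{i+1}$ and $N(W_{i+1}) \setminus W_i \subseteq W_{i+2}$. The two left-hand sides are therefore bounded respectively by $|W_{i-1} \setminus X| + |W_i| + |W_{i+1} \setminus Y|$ and by $|W_{i-1} \setminus X| + |W_i| + |W_{i+1}| + |W_{i+2} \setminus Y|$. The task reduces to upper bounding the total size of a short block of consecutive witness sets, using $\sum_j |W_j| \le k + \ell$ together with $|W_1| \ge |X|$, $|W_\ell| \ge |Y|$, and $|W_j| \ge 1$ for all $j$.

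For Part~2 I would split on whether $\{i-1, i+1\}$ meets $\{1, \ell\}$, yielding three subcases $3 \le i \le \ell - 2$, $i = 2$, and $i = \ell - 1$; the hypothesis $\ell \ge 6$ keeps these cases from coinciding. In the interior case, $|W_{i-1} \setminus X| = |W_{i-1}|$ and $|W_{i+1} \setminus Y| = |W_{i+1}|$, and the $\ell - 3$ witness sets outside $\{i-1,i,i+1\}$ contribute at least $|X|+|Y|+(\ell-5)$, giving $|W_{i-1}|+|W_i|+|W_{i+1}| \le k + 5 - |X| - |Y|$. In the boundary case $i = 2$, the $|X|$ discount sits inside $|W_1 \setminus X|$ rather than in the complement, and a short rearrangement gives the same bound; the case $i = \ell - 1$ is symmetric. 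Part~3 is identical in spirit, working with the four-set block $W_{i-1}, W_i, W_{i+1}, W_{i+2}$ against the $\ell - 6$ other non-extreme sets (again nonnegative thanks to $\ell \ge 6$), and handling the boundary subcases $i = 2$ or $i = \ell - 2$ to produce the bound $k + 6 - |X| - |Y|$. The argument is elementary counting, so the only real work is bookkeeping in the boundary cases, where the $|X|$ or $|Y|$ lower bound must be moved from the complement into the bracketed term on the left-hand side.
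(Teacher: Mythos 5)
Your proof is correct and matches the paper's approach: the paper states this observation without a separate proof, but the reasoning it relies on (and spells out later when justifying Equations~(\ref{eq:expand-cardinality-bound}) and~(\ref{eq:expand-cardinality-bound-2})) is exactly your counting argument — bound the neighbourhood terms by the enclosing witness sets, use $\sum_j |W_j| \le k+\ell$ from Observation~\ref{obs:sumofws-path}, charge $|X|$ and $|Y|$ to $W_1$ and $W_\ell$, and handle the boundary indices where $W_{i-1}=W_1$ or $W_{i+1}=W_\ell$ separately. Part~1 via the consecutive-adjacency structure of the witness partition is likewise the intended argument.
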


The first two properties are about a witness set
whereas the third property is about a pair of consecutive witness sets.
We use the second property to enumerate all potential  
$k$-witness sets and the corresponding potential $k$-prefix sets.
Triples consisting of a potential  
$k$-witness set, a corresponding potential $k$-prefix set and an integer in $[k] \cup \{0\}$ 
will describe a state of the dynamic programming table.
The third point in the observation is used to bound
the time needed for the algorithm to update all the entries in this table.

\subsection{Potential Witness Sets and Potential Prefix Sets}

Intuitively, if $G$ is $k$-contractible to $P_\ell$ with the witness structure 
$(W_1,W_2,\ldots,W_\ell)$ satisfying the required properties, 
a potential $k$-witness set is a candidate for $W_i$ with 
$2 \leq i \leq \ell-1$ and a potential $k$-prefix set is a 
candidate for the set of vertices in a `prefix witness set'.  See Figure~\ref{fig:path} for an illustration. 

\begin{definition}[Potential $k$-witness Set]
\label{def:k-potential-witness-set} 
A connected set $W \subseteq V(G) \setminus (X \cup Y)$ is called a potential $k$-witness set of $G$ if 
$|(N(W) \setminus (X \cup Y) | +  |W|  \le k + 5 - |X|-|Y|$ and $G - W$ has two connected components
$C_X, C_Y$ containing $X$ and $Y$, respectively.
\end{definition}

Note that every non-terminal witness set in a $P_\ell$-witness structure of $G$ is a potential $k$-witness set. Next, we define the notion of a \emph{potential $k$-prefix set}. 

\begin{definition}[Potential $k$-prefix set]
\label{def:k-nice-subset}
For a potential $k$-witness set $W \subseteq V(G) \setminus (X \cup Y)$ with $C_X$ and $C_Y$ being the two connected components of $G - W$ containing $X$ and $Y$, respectively, the sets $S_X := C_X \cup W$ and $S_Y := C_Y \cup W$ are called the \emph{potential $k$-prefix sets} associated with $W$.
\end{definition}

Note that a potential $k$-prefix set is a non-empty connected set $S \subseteq V(G)$ that contains a potential $k$-witness set $W$ such that $G[S]- W$ is one of the connected components of $G - W$. 
The number of potential $k$-prefix sets is twice the number of potential $k$-witness sets since every potential $k$-witness set is associated with two $k$-prefix sets. 
However, we note that a $k$-prefix set $S$  
can be associated with more than one potential $k$-witness set $W$.
Formally, there may exist potential $k$-witness sets $W \neq W'$ such 
that $S = C \cup W = C' \cup W'$,
where $C$ and $C'$ are the connected components of
$G - W$ and $G - W'$, respectively.

We will describe a dynamic programming algorithm in which every entry of the dynamic programming table that we subsequently construct will be indexed by a triple consisting of a potential witness set, a potential prefix set and a non-negative integer that is at most $k$. Therefore, our algorithm's running time will crucially depend upon an efficient enumeration of all potential witness sets and all potential prefix sets. Targeting towards this enumeration, we have the following lemma.

\begin{lemma}
\label{lem:conn-sets} 
For given integers $\alpha, \beta$, the number of connected sets $W \subseteq V(G)\setminus (X \cup Y)$ in $G$ such that $|W|=\alpha$ and $|N(W) \setminus (X \cup Y)|=\beta$ is $\calO^*(2^{\alpha+\beta})$ and these sets can be enumerated in $\calO^*(2^{\alpha+\beta})$ time. 
\end{lemma}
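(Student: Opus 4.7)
The plan is to reduce this enumeration directly to Proposition~\ref{prop:ab-conn-sets} by passing to the induced subgraph $G' := G - (X \cup Y)$. The first step will be to observe that a set $W \subseteq V(G) \setminus (X \cup Y)$ is connected in $G$ if and only if it is connected in $G'$. This is immediate because $G[W] = G'[W]$: deleting vertices in $X \cup Y$ destroys only edges incident to $X \cup Y$, and such edges have no endpoint in $W$ by the disjointness assumption.

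Next, I would identify the open neighborhood of $W$ in $G'$. Since $E(G')$ consists precisely of those edges of $G$ whose endpoints both lie in $V(G) \setminus (X \cup Y)$, and since $W$ itself is disjoint from $X \cup Y$, one easily checks that $N_{G'}(W) = N_G(W) \setminus (X \cup Y)$. Therefore, the sets described in the lemma are in exact correspondence with the $(\alpha,\beta)$-connected sets of $G'$ in the sense of Definition~\ref{def:a-b}.

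With this reformulation, the last step is to invoke Proposition~\ref{prop:ab-conn-sets} for the graph $G'$. It immediately delivers at most $2^{\alpha + \beta} \cdot |V(G')| \le 2^{\alpha + \beta} \cdot n$ such sets, enumerable in $\mathcal{O}^*(2^{\alpha + \beta})$ time. Since $G'$ is computable from $G$ in linear time, the overall running time matches the stated bound. The argument is essentially bookkeeping; the only conceptual move is recognizing that the two constraints ``$W$ avoids $X \cup Y$'' and ``the neighborhood is measured outside $X \cup Y$'' can be simultaneously absorbed by deleting $X \cup Y$ up front, so I do not anticipate any real obstacle.
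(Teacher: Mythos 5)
Your proof is correct, and it takes a genuinely different (and more economical) route than the paper's. The paper reproves the enumeration from scratch via a depth-bounded search tree: it guesses a vertex $v \in W$, maintains a pair $(Z,Z')$ with measure $\alpha+\beta-|Z|-|Z'|$, and branches on a vertex $u \in N(Z) \setminus (Z' \cup X \cup Y)$ into ``$u \in W$'' or ``$u \in N(W)\setminus(X\cup Y)$'', so that vertices of $X \cup Y$ are simply never placed on either side; this gives at most $2^{\alpha+\beta}$ leaves per choice of $v$. You instead observe that the two constraints --- $W$ avoids $X \cup Y$, and the neighborhood is measured outside $X \cup Y$ --- are simultaneously absorbed by passing to $G' = G - (X\cup Y)$, after which the required sets are exactly the $(\alpha,\beta)$-connected sets of $G'$ and Proposition~\ref{prop:ab-conn-sets} applies as a black box. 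Your two identities, $G[W]=G'[W]$ and $N_{G'}(W)=N_G(W)\setminus(X\cup Y)$, are both correct, and Proposition~\ref{prop:ab-conn-sets} does not require the host graph to be connected (its underlying argument is the same vertex-rooted branching), so invoking it on the possibly disconnected $G'$ is legitimate. Your reduction is shorter and avoids duplicating the branching analysis; the paper's version has the mild advantage of being self-contained and of exhibiting explicitly the search tree whose shape is reused informally elsewhere. Both arguments yield the same bound of $2^{\alpha+\beta}\cdot n$ sets enumerable in $\calO^*(2^{\alpha+\beta})$ time.
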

\begin{proof}
    We use a simple depth-bounded search tree algorithm for enumerating the required sets. First we guess a vertex $v \in V(G) \setminus (X \cup Y)$ with the interpretation that $v$ is in the connected set $W$ with the desired properties. The root of the search tree is labelled as $(\{v\},\emptyset)$ and in general each node of the search tree is labelled with a pair $(Z,Z')$ where $|Z| \leq \alpha'$ and $|Z'| \leq \beta'$ for some $\alpha', \beta'$ such that $\alpha' \leq \alpha$ and $\beta' \leq \beta$. With every node $N$ of the search tree with label $(Z,Z')$, we associate the measure $\mu(N)=\alpha+\beta-(|Z|+|Z'|)$. At each node $N$ labelled by $(Z,Z')$ with $\mu(N)>0$, we choose a vertex $u \in N(Z) \setminus (Z' \cup X \cup Y)$ and branch into two possibilities -- $u \in Z$ or $u \in Z'$ resulting in two new nodes with labels $(Z \cup \{u\}, Z')$ and $(Z,Z' \cup \{u\})$ respectively. This branching is exhaustive as the first branch considers the case when $u \in W$ and the second branch considers the case when $u \in N(W) \setminus (X \cup Y)$. As the measure drops by one in each of the two branches, the depth of the tree is at most $\alpha+\beta$ and the number of leaves is at most $2^{\alpha + \beta}$. The leaves $(Z,Z')$ where $Z' = N(Z) \setminus (X \cup Y)$, $|Z|=\alpha$, $|Z'|=\beta$ correspond to the required sets $W=Z$ such that $v \in W$. Now, we repeat this algorithm for each possible initial choice of $v$. Thus, the total number of connected sets $W$ with the required properties is at most $2^{\alpha + \beta} \cdot n$. Further, the set of all such sets are enumerated in $\mathcal{O}^*(2^{\alpha + \beta})$ time.\qed
\end{proof}

As every potential $k$-witness set is also a connected set $W \subseteq V(G)\setminus (X \cup Y)$ such that $|W|=\alpha$, $|N(W) \setminus (X \cup Y)|=\beta$ and $\alpha+\beta \leq k+5-|X|-|Y|$, we have the following corollary. 

\begin{corollary}
\label{cor:pot-int-sets}
    The number of potential $k$-witness sets in $G$ is $\calO^*(2^{k-|X|-|Y|})$ and these sets can be enumerated in $\calO^*(2^{k-|X|-|Y|})$ time.
\end{corollary}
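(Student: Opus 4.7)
The plan is to derive the corollary directly from Lemma~\ref{lem:conn-sets} by first ignoring the structural condition on $G - W$ (which only shrinks the candidate set) and then filtering out the bad candidates in polynomial time per candidate. By Definition~\ref{def:k-potential-witness-set}, every potential $k$-witness set $W$ is a connected subset of $V(G) \setminus (X \cup Y)$ satisfying $|W| + |N(W) \setminus (X \cup Y)| \leq k + 5 - |X| - |Y|$, so the collection of potential $k$-witness sets is contained in the union, over all non-negative integers $\alpha, \beta$ with $\alpha + \beta \leq k + 5 - |X| - |Y|$, of the connected sets counted by Lemma~\ref{lem:conn-sets}.

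The natural enumeration procedure is therefore: iterate over all valid pairs $(\alpha, \beta)$, and for each pair invoke the algorithm of Lemma~\ref{lem:conn-sets} to enumerate all connected sets $W \subseteq V(G) \setminus (X \cup Y)$ with $|W| = \alpha$ and $|N(W) \setminus (X \cup Y)| = \beta$ in $\mathcal{O}^*(2^{\alpha + \beta})$ time. Summing the geometric series in the exponent gives a total running time of
\[
\sum_{\alpha + \beta \leq k + 5 - |X| - |Y|} \mathcal{O}^*\!\left(2^{\alpha + \beta}\right) \;=\; \mathcal{O}^*\!\left(2^{k - |X| - |Y|}\right),
\]
where the additive constant $5$ is absorbed into the polynomial factor hidden by the $\mathcal{O}^*$ notation.

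Finally, for each enumerated connected set $W$, I would verify the remaining requirement of Definition~\ref{def:k-potential-witness-set} in polynomial time: namely, run a standard connected-components computation on $G - W$ and check that it has exactly two components, one containing $X$ and the other containing $Y$. Discarding the candidates that fail this check yields precisely the set of potential $k$-witness sets, with the same asymptotic bound on both their count and the enumeration time. There is no real obstacle here: the lemma does all the combinatorial work, the structural condition is a cheap post-processing filter, and the factor $2^{5 - |X| - |Y|}$ coming from the slack in the size budget is a constant in the exponent that contributes only polynomially.
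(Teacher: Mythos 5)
Your proposal is correct and follows essentially the same route as the paper: the corollary is derived directly from Lemma~\ref{lem:conn-sets} by noting that every potential $k$-witness set is one of the enumerated connected sets with $\alpha+\beta\le k+5-|X|-|Y|$, with the constant $2^5$ absorbed into the $\calO^*$ notation. Your added polynomial-time filtering step (checking the two-component condition on $G-W$) is a detail the paper leaves implicit, and it is the right thing to do.
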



\subsection{A Dynamic Programming Algorithm}

Towards defining the entries in the dynamic programming table $\Gamma$, we define the notion of a \emph{valid index}.
Informally, it is a tuple consisting of a potential $k$-witness set $W$, a potential $k$-prefix set $S$ associated with $W$, and an integer $k' \in \{0\} \cup [k]$.
The value corresponding to this index is set to \true,
if $G[S]$ is $k'$-contractible to a path (with desired properties) 
and $W$ is one of its end witness set. Note that there are two types of potential $k$-prefix sets,
one containing $X$ and the other containing $Y$.
We differentiate between these two types by adding a subscript, i.e.,
$S_X$ denotes a potential $k$-prefix set containing $X$
and $S_Y$ denotes a potential $k$-prefix set containing $Y$. 

\begin{definition}[Valid Index]
\label{def:valid}
For sets $S_X, W \subseteq V(G)$ and $k' \in \{0\} \cup [k]$, the triple $(S_X,W,k')$ is a \emph{valid index} if 
\begin{itemize}
\item $W$ is a potential $k$-witness set,
\item $S_X$ is a potential $k$-prefix set associated with $W$, and
\item $k' \in \{0\} \cup [k]$ and $k' + |N(S_X)| -1 \le k$. 
\end{itemize}
Similarly, we define valid indices corresponding to $S_Y, W \subseteq V(G)$ and 
$k' \in \{0\} \cup [k]$.
\end{definition}

\noindent The table $\Gamma$ has entries only corresponding to valid indices.  
We define 
\begin{align*}
\Gamma[S_X, W, k']  = &\  \true\ \text{if and only if $G[S_X]$ is $k'$-contractible to a path with witness}\\
& \text{structure $(W_1,\ldots,W_q=W)$ for some $q \ge 2$ such that $X \subseteq W_1$}
\end{align*}
\noindent and 
\begin{align*}
\Gamma[S_Y, W, k'']  = &\  \true\ \text{if and only if $G[S_Y]$ is $k''$-contractible to a path with witness}\\
& \text{structure $(W'_1,\ldots,W'_p=W)$ for some $p \ge 2$ such that $Y \subseteq W'_1$}
\end{align*}

Note that the third condition in Definition~\ref{def:valid} insists that $k'$ and $k''$ are small enough to accommodate `future' edge contractions which are at least $|N(S_X)|-1$ or $|N(S_Y)|-1$ in number, respectively.  
Next we describe the initialization of entries of $\Gamma$.
We describe the algorithm for entries of type $\Gamma(S_X, W, k')$
and a similar process is applicable for entries of type $\Gamma(S_Y, W, k')$. 
First, all entries are set to \false. 
Then, the algorithm uses Corollary~\ref{cor:pot-int-sets} 
to enumerate all potential $k$-witness sets
of a predefined size and neighbourhood size 
(which we specify later).
In this step, the algorithm enumerates all the subsets 
that are potential candidates for either 
$W_2$ or $W_{\ell - 1}$.
For each such set $W$, let $S_X$ and $S_Y$ be the two
potential $k$-prefix sets associated with $W$.
The algorithm sets $\Gamma[S_X, W, k'] = \true$
for every $|S_X| - 2  \le k' \le k -|N(S_X)|+1$.
Similarly, it sets $\Gamma[S_Y, W, k'] = \true$
for every $|S_Y| - 2  \le k' \le k-|N(S_Y)|+1$.
The correctness of this initialization follows from the definition.

We now describe how the algorithm updates the entries in $\Gamma$.
Unlike most of the dynamic programming algorithms, our algorithm does not 
`look back' previously set $\true$ entries to update an entry that is set to \false\ in the initialization phase.
Instead, it 
checks entries $\Gamma[S_X, W, k']$ that are $\true$ and `looks forward' 
and sets certain entries $\Gamma[S^{\circ}_X, A, k^{\circ}]$ to \true\ 
where $S_X \subseteq S^{\circ}_X$, 
$(N(W) \setminus S_X) \subseteq A$, and ${k' + |A| - 1} \le k^{\circ} \leq k - |N_{G-S_X}(A)|+1$.
Intuitively, $A$ is a subset that can be appended to the 
path witness structure of $G[S_X]$ as a new witness set to obtain a path witness structure of $G[S_X \cup A]$. 
Hence, if we can obtain a path from $G[S_X]$ (with desired properties) by 
contracting at most $r$ edges, then we can obtain a path from 
$G[S_X \cup A]$ (with desired properties)
by contracting at most $r + |A| - 1$ edges. Thus, $\Gamma[S^{\circ}_X, A, 
k^{\circ}]$ can be set to \true. 
Refer to Figure~\ref{fig:path} for an illustration. 

\begin{figure}[t]
\centering
\includegraphics[scale=.75]{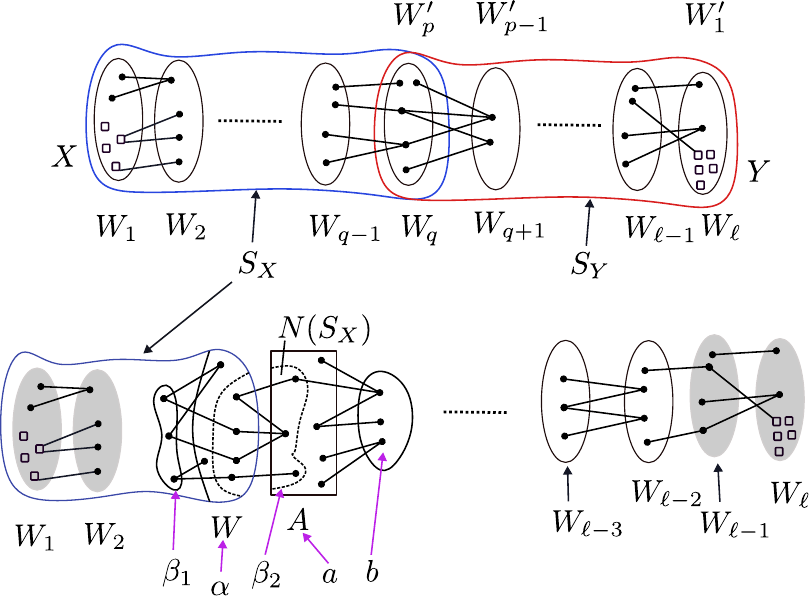}
\caption{(Top)
$W_q$ is one of the potential $k$-witness sets and $S_X$ and $S_Y$ are the corresponding potential $k$-prefix sets. The objective is to determine whether $G[S_X]$ and $G[S_Y]$ admit path witness structures $(W_1, W_2, \dots, W_{q})$ and $(W'_1, W'_2, \dots, W'_p)$ such that 
$X \subseteq W_1$, $Y \subseteq W'_1$, $W_q = W'_p$ and total number of edges contracted is at most $k$ 
(Bottom) 
$W_1$, $W_2$, $W_\ell$, $W_{\ell-1}$ (shaded sets) are identified during the initialization phase.
For a potential $k$-witness set $W$ and the corresponding 
potential $k$-prefix set $S_X$, the algorithm enumerates 
all the possible sets $A$ that can be appended to a path witness structure of $G[S_X]$. 
The witness set $W_q$, mentioned in the top figure, lies
in the unshaded area. The idea is to expand $S_X$ until it contains $W_{\ell - 2}$. The sets with sizes $\beta_1, \alpha, a$ and $b$ corresponds to four internal witness sets (except in some corner cases) justifying the search for sets $A$ satisfying $\beta_1 + \alpha + a + b \le k + 6 - |X|-|Y|$.  
}
\label{fig:path}
\end{figure}

This brings us to the following notion of {\em potential $(S_X,W)$-attachments}. 


\begin{definition}[Potential $(S_X,W)$-attachment]
\label{def:pot-att}
For a potential $k$-witness set $W$ and the potential $k$-prefix set 
$S_X$ associated with $W$, 
the set $\calA(S_X, W)$ of all potential $(S_X, W)$-attachments is defined as $\bigcup_{a=|N(S_X)|}^{k+1} \calA_{a}(S_X, W)$ where 
\begin{align*}
\calA_{a}(S_X, W) = & \{ A\subseteq V(G) \setminus S_X \mid G[A] 
\text{ is connected, } N(S_X) \subseteq A, |A| = a, \text{ and } \\
& |(N(W) \cap S_X) \setminus X| + |W| + |A| + |(N(A) \setminus S_X)\setminus Y| \\
& \hspace{1cm} \le k + 6-|X|-|Y| \}
\end{align*}
\end{definition}


We now justify the upper bound on the cardinality of sets mentioned in the definition. Suppose $G$ is $k$-contractible to $P_\ell$ with witness structure \sloppy $(W_1,\dots, W_{i-1}, W_i,
W_{i+1}, W_{i+2}, \dots, W_{\ell})$ where 
$X \subseteq W_1$ and $Y \subseteq W_{\ell}$.
In the initialization phase, the algorithm identifies sets $W_1, W_2$
and $W_{\ell - 1}, W_{\ell}$. 
Hence, a set $A$ in the potential $(S_X,W)$-attachment 
$\calA_{a}(S_X, W)$ corresponds to the case
when $W$ is some internal witness set $W_i$ 
where $2 \le i \le \ell - 2$.
In fact, we can improve the upper bound to $\ell - 3$ and our algorithm tries to expand 
$S_X$ only till $W = W_{\ell - 3}$ to find 
$A$ which is a potential candidate for $W_{\ell - 2}$.
This implies that $W_1, W_{i-1}, W_{i}, W_{i+1}, W_{i+2}, W_{\ell}$
are six different witness sets when $2 < i \le \ell - 3$ and
are five different sets when $i = 2$ since $6 \le\ell$.

By the definitions of the relevant sets, when $W = W_i$
we have $S_X = W_1 \cup W_2 \cup \cdots \cup W_{i}$.
Hence, $N(W) \cap S_X$ is part of $W_{i-1}$,
$A$ is a potential candidate for $W_{i+1}$  and $N(A) \setminus S_X$
is part of $W_{i+2}$.
As $X$ and $Y$ are part of $W_1$ and $W_\ell$, respectively,
for $2 < i \le \ell - 3$ we have
\begin{align}
|W_1| + |W_{i-1}| + |W_{i}| + |W_{i+1}| + |W_{i+2}| + |W_{\ell}| & \le k + 6
\nonumber \\ 
|X| + |N(W) \cap S_X| + |W| + |A| + |N(A) \setminus S_X| + |Y| & \le k + 6 \nonumber\\
|N(W) \cap S_X| + |W| + |A| + |N(A) \setminus S_X| & \le k + 6 - |X| - |Y|
\label{eq:expand-cardinality-bound}
\end{align}
For $i = 2$, we have $W_1 = W_{i-1}$ and hence
\begin{align}
|(N(W) \cap S_X)\setminus X| + |W| + |A| + |N(A) \setminus S_X| & \le k + 5 - |X| - |Y|
\label{eq:expand-cardinality-bound-2}
\end{align}

Note that Equation \ref{eq:expand-cardinality-bound} subsumes Equation \ref{eq:expand-cardinality-bound-2}. Coming back to the updating phase of the algorithm,
it processes each entry $\Gamma[S_X,W,k']$ in the increasing order of $|S_X|$, $|W|$ and $k'$.
If $\Gamma[S_X,W,k']=\false$, then it skips to the next entry.
Otherwise, it compute $\calA(S_X, W)$ 
(we specify the steps in this computation later).
Then, for every set $A$ in $\calA(S_X, W)$ and
every $k' + (|A| - 1) \le k^{\circ}  \le k - |N_{G-S_X}(A)|+1$,
it sets $\Gamma[S_X \cup A, A, k^{\circ}] =\true$.
This completes the updating phase of the algorithm.

In the final phase, the algorithm iterates over all
potential $k$-witness sets $W$ and the corresponding
valid entries.
Suppose $S_X$ and $S_Y$ are the two potential $k$-prefix sets
associated with $W$.
If there is pair of integers $k_1, k_2$ such that
$k_1 + k_2 - (|W| - 1) \le k$, and
$\Gamma[S_X, W, k_1] = \true$ and
$\Gamma[S_Y, W, k_2] = \true$,
then the algorithm concludes that $G$ is $k$-contractible
to a path with desired properties.
This concludes the description of the algorithm
which we summarize as Algorithm~\ref{alg:path-conn}.

\begin{algorithm}[t]
	\caption{\textsf{PC-DP$(G, X, Y, k)$}}
	\label{alg:path-conn}
	\begin{algorithmic}[1]
		\Require An undirected graph $G$ on $n$ vertices, disjoint subsets $X, Y \subseteq V(G)$ and a positive integer $k$ such that $n> k+5$.
		\Ensure Determine if $G$ is $k$-contractible to a path such that $X$ is contained in one of the end witness sets and $Y$ is contained in the other end witness set	
		\blueLcomment{Phase 1: Initializing $\Gamma$}
		\State Construct a table $\Gamma$ whose entries are index by valid indices 
        \State Initialize all entries in $\Gamma$ to \false 
		\State Enumerate the set $\mathcal{W}$ of all potential $k$-witness sets using Corollary~\ref{cor:pot-int-sets}    
		\For {each $W \in \mathcal{W}$}
		\State Let $S_X$ and $S_Y$ be the $k$-prefix sets associated with $W$
		\State Set $\Gamma[S_X, W, k'] = \true$ for every $|S_X|-2\leq k' \leq k-|N(S_X)|+1$
		\State Set $\Gamma[S_Y, W, k'] = \true$ for every $|S_Y|-2\leq k' \leq k-|N(S_Y)|+1$
        \EndFor		
        \blueLcomment{Phase 2: Updating $\Gamma$}
        \For {each entry $\Gamma[S,W,k']$ in the increasing order of $|S|$, $|W|$, $k'$}
        	\If{$\Gamma[S,W,k']=\false$}
        		\State Skip to the next entry in $\Gamma$
        	\Else 
        		\State Let $|W|=\alpha$, $|(N(W) \cap S) \setminus (X \cup Y)| = \beta_1$ and $|N(W) \setminus S|=\beta_2$.
        		\For{every integer $\beta_2 \leq a \leq k+1$}
        			\State Compute $\calA_{a}(S, W)$ using Proposition~\ref{prop:ab-conn-sets}
        		\EndFor
        		\State Compute $\calA(S, W)=\bigcup_{a=|N(S|}^{k+1} \calA_{a}(S, W)$ 
                \For{every set $A$ in $\calA(S, W)$}
					\For{every $k' + |A| - 1 \leq k^{\circ} \leq k-|N_{G-S}(A)|+1$ }       			
        				\State Set $\Gamma[S \cup A, A, k^{\circ}] =\true$ 
        			\EndFor
        		\EndFor
			\EndIf
		\EndFor
	\blueLcomment{Phase 3: Returning the output}
	\For {each $W \in \mathcal{W}$}
		\State Let $S_X$ and $S_Y$ be the $k$-prefix sets associated with $W$
		\If{$\Gamma[S_X, W, k_1] = \Gamma[S_Y, W, k_2]=\true$ for some $k_1+k_2-|W|+1 \leq k$}
			\State \Return $(G,X,Y,k)$ is a \yes-instance
		\EndIf
	\EndFor
	\State \Return $(G,X,Y,k)$ is a \no-instance
	\end{algorithmic}
\end{algorithm}


\subsection{Correctness and Running Time Analysis}
In the following lemma, we justify the steps 
that updates the value of entries in the table.
Once again, we argue correctness only for entries of type $\Gamma[S_X, W, k']$
as the argument is identical for entries of type $\Gamma[S_Y, W, k']$.

\begin{lemma}\label{lemma:table-correct}
Algorithm~\ref{alg:path-conn} correctly computes 
$\Gamma[S_X, W, k']$ for every valid index $(S_X, W,k')$. 
\end{lemma}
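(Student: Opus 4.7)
We prove the lemma in two directions: soundness (if the algorithm sets $\Gamma[S_X, W, k'] = \true$, then $G[S_X]$ is $k'$-contractible to a path $(W_1, \ldots, W_q = W)$ with $X \subseteq W_1$) and completeness (the converse). Soundness proceeds by induction on the order in which entries are set to $\true$ (first by $|S_X|$, then $|W|$, then $k'$), and completeness by induction on $q$, the length of the witness structure.

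For soundness, the base case handles entries set during initialization: for a potential $k$-witness set $W$ with associated prefix $S_X = C_X \cup W$, the pair $(C_X, W)$ is a valid $P_2$-witness structure of $G[S_X]$ using $|S_X|-2$ contractions, and the range of $k'$ for which initialization writes $\true$ matches precisely the range for which $G[S_X]$ is $k'$-contractible with $(S_X, W, k')$ valid. For the inductive step, suppose $\Gamma[S_X \cup A, A, k^{\circ}]$ is set from a $\true$ entry $\Gamma[S_X, W, k']$. By the inductive hypothesis, there is a path witness structure $(W_1, \ldots, W_q = W)$ of $G[S_X]$ attained using $k'$ contractions. Appending $A$ yields the structure $(W_1, \ldots, W_q, A)$ of $G[S_X \cup A]$; it is valid because $A$ is connected and $N(S_X) \subseteq A$ by $A \in \mathcal{A}(S_X, W)$, ensuring $W$ is adjacent to $A$, while $W_i$ for $i<q$ has no edges to $A$ because $S_X \setminus W$ is a connected component of $G - W$ (a consequence of $W$ being a potential $k$-witness set), so vertices of $S_X \setminus W$ have no neighbours outside $S_X$. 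The contraction count is $k' + |A| - 1 \le k^{\circ}$.

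For completeness, the case $q = 2$ is directly handled by initialization, since validity of $(S_X, W, k')$ gives $k' \le k - |N(S_X)| + 1$ while $k'$-contractibility to $P_2$ forces $k' \ge |S_X|-2$. For $q \ge 3$, the plan is to peel off the last witness set: set $W' = W_{q-1}$, $S_X' = S_X \setminus W$, and $k'' = k' - (|W| - 1)$. Then $G[S_X']$ is $k''$-contractible to $(W_1, \ldots, W_{q-1})$, and once we verify that $(S_X', W', k'')$ is a valid index, the inductive hypothesis yields $\Gamma[S_X', W', k''] = \true$. Verifying $W \in \mathcal{A}(S_X', W')$ (with $k'$ in the triggering range $[k''+|W|-1,\, k-|N_{G-S_X'}(W)|+1]$, both endpoints being straightforward from the recurrence $k' = k'' + |W|-1$ and the identity $N_{G-S_X'}(W) = N(S_X)$) then allows the update step to set $\Gamma[S_X, W, k'] = \true$.

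The main technical obstacle is verifying the validity of $(S_X', W', k'')$, i.e., that $W' = W_{q-1}$ is itself a potential $k$-witness set in $G$. The separation property (two components of $G - W_{q-1}$ containing $X$ and $Y$ respectively) follows from the path structure of $G[S_X]$ combined with the fact that $S_X \setminus W$ is a component of $G - W$, forcing $W_1 \cup \ldots \cup W_{q-2}$ to have no edges to $V(G) \setminus S_X$. The size bound $|N(W_{q-1}) \setminus (X \cup Y)| + |W_{q-1}| \le k + 5 - |X| - |Y|$, together with the size-bound component of $W \in \mathcal{A}(S_X', W')$ (which mirrors Property 3 of Observation \ref{obs:k-potential-witness-set-prop}), will be established by a disjoint-set accounting argument: the sets $X$, $(N(W_{q-1}) \cap W_{q-2}) \setminus X$, $W_{q-1}$, $W_q$, $(N(W_q) \setminus S_X) \setminus Y$ and $Y$ are pairwise disjoint and lie within at most four distinct witness sets of $G[S_X]$ plus $V(G) \setminus S_X$, so Observation \ref{obs:sumofws-path} applied to $G[S_X]$ together with the validity constraint $k' + |N(S_X)| - 1 \le k$ on the original index yield the required bounds.
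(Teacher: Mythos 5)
Your overall strategy is the same as the paper's: induction with initialization as the base case; soundness by tracing the triggering entry $\Gamma[S_X,W,k']$ and appending $A$; completeness by peeling off $W_q$, verifying that $(S_X\setminus W_q,\,W_{q-1},\,k'-|W_q|+1)$ is a valid index, applying the induction hypothesis, and checking $W_q\in\mathcal{A}_{|W_q|}(S_X\setminus W_q, W_{q-1})$. Your soundness direction is in fact more careful than the paper's (you justify the non-adjacency of $W_i$, $i<q$, to $A$ via the component property of $S_X\setminus W$, which the paper leaves implicit), and your bookkeeping of the $k^{\circ}$ ranges and the identity $N_{G-S_X'}(W)=N(S_X)$ is correct.

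The gap is in the closing accounting argument, which is indeed the crux. Track your proposed bound through: the sets inside $S_X$ contribute $|X|+|(N(W_{q-1})\cap S_X')\setminus X|+|W_{q-1}|+|W_q|\le |W_1|+|W_{q-2}|+|W_{q-1}|+|W_q|\le k'+4$ by Observation~\ref{obs:sumofws-path} applied to $G[S_X]$, and $(N(W_q)\setminus S_X')\setminus Y=N(S_X)\setminus Y$ contributes $|N(S_X)|-|N(S_X)\cap Y|\le k-k'+1-|N(S_X)\cap Y|$ by validity of the original index. Adding $|Y|$ gives a total of at most $k+5+|Y\setminus N(S_X)|$, whereas you need $k+6$; so your derivation only closes when $|Y\setminus N(S_X)|\le 1$. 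The same deficit appears when you try to certify that $W_{q-1}$ is a potential $k$-witness set ($|N(S_X)|\ge |Y|$ is needed but not implied). The root cause is that the valid-index condition $k'+|N(S_X)|-1\le k$ budgets for the future contractions of $N(S_X)$ but not for those of $Y$, so the purely local hypotheses are too weak. The paper sidesteps this by invoking Equation~\ref{eq:expand-cardinality-bound}, which is derived under the assumption that the prefix extends to a full $k$-contraction of $G$ to $P_\ell$ with $Y\subseteq W_\ell$ --- there $|Y|$ is charged against the budget of the global final witness set, a resource your local accounting does not have. To repair your argument, either adopt that global framing (prove the backward direction only for prefixes of actual solutions, which is all Phase~3 consumes) or strengthen the validity condition so that it also reserves $|Y|-1$ contractions.
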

\begin{proof}
We prove that $\Gamma[S_X, W, k']$ is set to 
$\true$ if and only if $G[S_X]$ is 
$k'$-contractible to a path with witness structure 
$(W_1,W_2,\dots, W_q)$ such that $X \subseteq W_1$ and 
$W_q= W$. 

The correctness of the initialization step
follows from the fact that Corollary~\ref{cor:pot-int-sets}
enumerates all $k$-potential witness sets.
Also, once an entry is set to \true, it never changes its value.
Hence, the algorithm sets $\Gamma[S_X, W, k'] = \true$
in the initialization step, if and only if
$G[S_X]$ is $k'$-contractible to a path (that is $P_2$)
with witness structure $(W_1,W_2)$ such that 
$X \subseteq W_1$ and $W_2= W$. 

We prove the correctness of the update phase by induction on $|S_X| + k'$.
The base case corresponds to all the values updated in 
the initialization phase.
Suppose the algorithm 
sets $\Gamma[S_X, W, k']$ correctly for every valid index 
$(S_X,W,k')$ where 
$|S_X| + k' \leq r$ and $r>0$. 
Consider an entry $\Gamma[S^{\circ}_X, W^{\circ}, 
k^{\circ}]$ where $|S^{\circ}_X| + k^{\circ} = r + 1$ and $k^{\circ} \leq k-|N(S^{\circ}_X)|+1$. 
We show that 
$\Gamma[S^{\circ}_X, 
W^{\circ}_X, k^{\circ}]$ is set to $\true$ if and only if 
$G[S^{\circ}_X]$ is $k^{\circ}$-contractible to a path with witness structure 
$(W_1,W_2,\dots, W_q)$ such that $X \subseteq W_1$ and 
$W_q = W^{\circ}$. 

$(\Rightarrow)$ In the forward direction, suppose the algorithm 
sets $\Gamma[S^{\circ}_X, W^{\circ}, k^{\circ}]$ to \true.
If $\Gamma[S^{\circ}_X, W^{\circ}, k^{\circ}]$ is set to $\true$ 
in the initialization phase, then it is the correct value, as mentioned above. 
Consider the case when $\Gamma[S^{\circ}_X, W^{\circ}, k^{\circ}]$ 
is changed from \false\ to \true\ during the update phase. 
This update was done during the inspection of some \true\ entry $\Gamma[\widehat{S}_X, \widehat{W}, \widehat{k}]$. 
Then, the following holds.
\begin{itemize}
\item $W^{\circ} \in \calA_{a}(\widehat{S}_X, \widehat{W})$ where $|N(\widehat{S}_X)| \leq a=|W^{\circ}| \leq k+1$.
\item $S^{\circ}_X=\widehat{S}_X \cup W^{\circ}$ and $\widehat{S}_X \cap W^{\circ} = \emptyset$.
\item $\widehat{k}+a-1 \leq k^{\circ} \leq k-|N(S^{\circ}_X)|+1$. 
\end{itemize}
It follows that $|\widehat{S}_X| = |S^{\circ}_X| - a$ and $k^{\circ} - a + 1 \ge \widehat{k}$.
As $a \geq 1$, we have $|\widehat{S}_X| + \widehat{k} \le  (|S^{\circ}_X| - a) + (k^{\circ} - a + 1) = r+2(1-a) \le r$. 
By induction hypothesis, $\Gamma[\widehat{S}_X, \widehat{W}, \widehat{k}]$ is correctly computed. 
That is, $G[\widehat{S}_X]$ is $\widehat{k}$-contractible to a 
path with witness structure $(W'_{1}, W'_{2}, \dots, W'_{p})$ such that
$X \subseteq W'_1$ and $W'_{p} = \widehat{W}$. 
Then, $(W'_{1}, W'_{2}, \dots, W'_{p}, W^{\circ})$ is a path 
witness structure of $G[S^{\circ}_X]$ implying that $G[S^{\circ}_X]$ 
is $k^{\circ}$-contractible to a path with desired property. 
This concludes the proof for the forward direction of the claim.

$(\Leftarrow)$
In the backward direction, suppose $G[S^{\circ}_X]$ is 
$k^{\circ}$-contractible 
to a path with witness structure $(W_1,W_2,\dots, {W_q})$ such that 
$X \subseteq W_1$ and $W_{q} = W^{\circ}$. 
We show that the algorithm sets 
$\Gamma[S^{\circ}_X, W^{\circ}, k^{\circ}]$ to $\true$. 
If $q = 2$, then because of the correctness of the initialization phase, $\Gamma[S^{\circ}_X, W^{\circ}, k^{\circ}]$ is set to $\true$. 
Consider the case when $q \ge 3$. 
Let $S_X = W_1 \cup W_2 \cup \cdots \cup W_{q-1}$
and $k' = k^{\circ} - (|W_{q}| - 1)$. 
Then, $G[S_X]$ is $k'$-contractible to a path 
with witness structure $(W_1,W_2,\dots, W_{q-1})$
such that $X \subseteq W_1$.
We prove that the algorithm correctly sets 
$\Gamma[S_X, W_{q-1}, k']$ to \true\ and 
then show that while examining $\Gamma[S_X, W_{q-1}, k']$, 
it sets $\Gamma[S^{\circ}_X, W^{\circ}, k^{\circ}]$ to $\true$. 

We begin by showing that the triples $(S_X,W_{q-1},k')$ and 
$(S^{\circ}_X, W^{\circ}, k^{\circ})$ are valid indices. 
Clearly, $W_{q-1}$ is a potential $k$-witness set and 
$G[S_X]-W_{q-1}$ is one of the connected components of 
$G-W_{q-1}$ that contains $X$.
Further, $k' + |N(S_X)|-1 \leq k^{\circ} - (|W_{q}| - 1) + |W_q|-1 \leq 
k-|N(S_X^{\circ})|+1$. 
Hence $(S_X,W_{q-1},k')$ is a valid index. 
Similarly, $W_{q}$ is a potential $k$-witness set and 
$G[S^{\circ}_X]-W_{q}$ is one of the connected components of $G-W_{q}$.
Thus, by this property and the assumption on $k^{\circ}$, it 
follows that $(S^{\circ}_X, W^{\circ}, k^{\circ})$ is a valid index. 
Since $W_q \neq \emptyset$, we have $|S_X| + k' \le |S^{\circ}_X| - |W_q| + k^{\circ} - (|W_q| - 1) \le r + 2 \cdot (1 - |W_q|) \le r$.
By induction hypothesis, we can assume that the algorithm 
correctly sets $\Gamma[S_X, W_{q - 1}, k']$ to \true. 

While examining $\Gamma[S_X, W_{q-1}, k']$, 
the algorithm constructs $\calA_{a}(S_X, W_{q-1})$ for every 
$|N(S_X)| \le a \le k + 1$.
To complete the argument, it suffices to prove that $W_q \in \calA_{a}(S_X, W_{q-1})$ where $a = |W_q|$. 
It is easy to verify that
$W_q \subseteq V(G) \setminus S_X$,
$N(S_X) \subseteq W_q$, and 
$W_q$ is a connected set in $G$. 
It remains to show that condition on the cardinality
mentioned in the definition of $\calA_{a}$ is satisfied.
As mentioned before, the algorithm tries to expand 
$S_X$ till it includes all the witness sets before 
the penultimate witness set $W_{\ell - 1}$, i.e.,
till $S_X = W_1 \cup W_2 \cup \cdots \cup W_{\ell - 2}$.
Hence, when the algorithm 
is enumerating the possible sets $A$ (which are candidates for $W_q$)
to expand $S_X$ that has $W_{q-1}$ as its last witness set,
it is safe to assume that $q - 1 \le \ell - 3$.
Now, by applying Equation~\ref{eq:expand-cardinality-bound} for
$i = q - 1$, we get the desired bound.
Hence, $W_q \in \calA_{a}(S_X, W_{q-1})$, and 
algorithm sets $\Gamma[S^{\circ}_X, W^{\circ}, k^{\circ}]$ to \true.
This concludes the proof of the backward direction
and hence that of the lemma. \qed
\end{proof}

In the next lemma, we argue that the algorithm terminates in the desired time.

\begin{lemma}\label{lemma:table-rt}
Algorithm~\ref{alg:path-conn} terminates in $\calO^*(2^{k-|X|-|Y|})$ time. 
\end{lemma}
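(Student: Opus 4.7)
The plan is to bound the running time of Algorithm~\ref{alg:path-conn} by analyzing its three phases separately and showing that each runs in $\calO^*(2^{k-|X|-|Y|})$ time.

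For Phase~1 (initialization), I would invoke Corollary~\ref{cor:pot-int-sets}: the set of potential $k$-witness sets has size $\calO^*(2^{k-|X|-|Y|})$ and can be enumerated in the same time. For each such $W$, the two associated prefix sets $S_X$ and $S_Y$ are obtained by computing the two connected components of $G-W$ in polynomial time, and then $\calO(k)$ table entries are set to \true. Phase~3 (output) has essentially the same structure: the loop over potential $k$-witness sets costs $\calO^*(2^{k-|X|-|Y|})$, plus $\calO(k^2)$ per iteration for scanning pairs $(k_1,k_2)$ with $k_1+k_2-|W|+1\le k$.

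The heart of the running-time argument is Phase~2. For a processed entry $\Gamma[S, W, k']$ with value \true, I would first observe the identity $N(S) = N(W) \setminus S$, which holds because $S = C \cup W$ for a connected component $C$ of $G-W$ and hence $N(C)\setminus C \subseteq W$; consequently $|N(S)| = \beta_2$, where $\beta_2 := |N(W) \setminus S|$. Enumerating $\calA_a(S, W)$ then amounts to enumerating $(N(S), a, b)$-connected sets for the relevant $b$ in the appropriate induced subgraph, which by Proposition~\ref{prop:ab-conn-sets} takes $\calO^*(2^{a + b - \beta_2})$ time. I would then bound the total work in Phase~2 by summing over tuples $(\alpha, \beta_1, \beta_2, a, b)$ constrained by $\beta_1 + \alpha + a + b \le k + 6 - |X| - |Y|$ as furnished by Equation~\ref{eq:expand-cardinality-bound}. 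For fixed $(\alpha, \beta_1, \beta_2)$, Lemma~\ref{lem:conn-sets} bounds the number of candidate witness sets $W$ by $\calO^*(2^{\alpha + \beta_1 + \beta_2})$; each such $W$ admits at most two prefix sets and contributes $\calO^*(2^{a + b - \beta_2})$ attachments for each $(a, b)$, with $\calO(k)$ further updates $\Gamma[S\cup A, A, k^\circ]$ per attachment. The $2^{\beta_2}$ factors cancel, giving a total of $\calO^*(2^{\alpha + \beta_1 + a + b}) = \calO^*(2^{k + 6 - |X| - |Y|}) = \calO^*(2^{k - |X| - |Y|})$ after absorbing polynomial overhead from summing over the parameter choices and from updating entries.

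The main obstacle I expect is getting the Phase~2 accounting to line up. The crucial technical point is the identity $|N(S)| = \beta_2$: it furnishes precisely the $2^{-|N(S)|}$ saving in Proposition~\ref{prop:ab-conn-sets} needed to offset the $2^{\beta_2}$ factor incurred when enumerating all candidate witness sets $W$ via Lemma~\ref{lem:conn-sets}. Without this cancellation, small-boundary prefix sets could in principle produce an explosion of attachments well beyond the overall budget, so making this observation early and using it to package the sum cleanly is the key step to push through.
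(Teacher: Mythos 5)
Your proposal is correct and follows essentially the same route as the paper: Phases~1 and~3 are charged to the enumeration of potential $k$-witness sets via Corollary~\ref{cor:pot-int-sets}, and Phase~2 is bounded by partitioning the pairs $(S,W)$ according to $(\alpha,\beta_1,\beta_2)$, enumerating attachments as $(N(W)\setminus S, a, b)$-connected sets in $G-S$ via Proposition~\ref{prop:ab-conn-sets}, cancelling the $2^{\beta_2}$ factors, and closing with the constraint $\beta_1+\alpha+a+b\le k+6-|X|-|Y|$ from Equation~\ref{eq:expand-cardinality-bound}. Your explicit observation that $N(S)=N(W)\setminus S$ (hence $|N(S)|=\beta_2$) is used implicitly in the paper's choice of the anchor set $Q$ and is a correct and helpful clarification.
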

\begin{proof}
By Corollary~\ref{cor:pot-int-sets}, the number of potential 
$k$-witness sets is $\calO^*(2^{k - |X| - |Y|})$.
As each potential $k$-witness set is associated with at most two potential 
$k$-prefix sets, it follows that the number of valid indices 
(corresponding to entries in $\Gamma$) is  $\calO^*(2^{k - |X| - |Y|})$. 
Thus, Phase 1 and Phase 3 of the algorithm take $\calO^*(2^{k - |X| - |Y|})$ 
time. 

To bound the running time of Phase 2, we define $\calV_k$ to be the set of all pairs $(S, W)$ such that $S$ is a potential $k$-prefix set and $W$ is a potential $k$-witness set associated with $S$. We partition $\calV_k$ by the size of neighborhood of the potential $k$-witness set in the pair.
Consider three non-negative integers $\alpha, \beta_1, \beta_2$.
We specify the range (i.e., upper bound) of these integers later. 
Define $\calV_k^{\alpha, \beta_1, \beta_2}$ 
as the collection of pairs $(S, W)$ in $\calV_k$ 
such that $W \subseteq V(G)\setminus (X \cup Y)$ is a connected set in $G$ such that $|W|=\alpha$ and $|N(W) \setminus (X \cup Y)|=\beta_1+\beta_2$ with $|(N(W) \cap S) \setminus X| = \beta_1$ 
and $|N(W) \setminus S|=\beta_2$. From the upper bound on the number of such sets given by 
Lemma~\ref{lem:conn-sets} and the fact that every potential $k$-witness set is associated with at most two potential $k$-prefix sets, $|\calV_k^{\alpha,\beta_1, \beta_2}|$ is $\calO^*(2^{\alpha + \beta_1 + \beta_2})$. For every $(S, W) \in  \calV_k^{\alpha,\beta_1,\beta_2}$, 
the algorithm constructs $\calA_a(S,W)$ for every 
$\beta_2 \le a \le k + 1$. 
Define $Q := N(W) \setminus S$.
Then, every set $A$ in $\calA_{a}(S, W)$ is a 
$(Q, a, b)$-connected set in $G - S$ for some 
appropriate value of $b$. 
Thus, we can enumerate elements $A$ of 
$\calA_a[(S, W)]$ with $|N(A)\setminus S|=b$ 
using Proposition~\ref{prop:ab-conn-sets} in 
$\calO^*(2^{a + b - \beta_2})$ time. 
Therefore, the running time of Phase 2 ignoring polynomial factors is 
$$\sum\limits_{\substack{\alpha, \beta_1,\beta_2, a, b}} 2^{\alpha + \beta_1 + \beta_2} \cdot 2^{a + b - \beta_2} = \sum\limits_{\substack{\alpha, \beta_1, a, b}} 2^{\beta_1 +\alpha  + a + b}$$ 
Recall Equation~\ref{eq:expand-cardinality-bound} which 
encodes the cardinality conditions mentioned in the definition 
of $\calA_a[(S, W)]$.
Note that $\beta_1, \alpha, a, b$ corresponds to 
$|N(W) \cap S|, |W|, |A|$,  and $|N(A) \setminus S|$,
respectively. 
See Figure~\ref{fig:path} for an illustration.
Recall that $\calA_a[(S, W)]$ contains only sets for which
the corresponding values satisfy 
$\beta_1 +\alpha  + a + b \le k + 6 - |X| - |Y|$.
Hence, the running time of Phase 2 of the algorithm,
and hence that of the overall algorithm is $\calO^*(2^{k-|X|-|Y|})$. \qed
\end{proof}

This establishes Theorem \ref{thm:path-ends-fpt-algo}. As \textsc{Path Contraction} is \textsc{Path Contraction With Constrained Ends} when $X=\emptyset$ and $Y=\emptyset$, we obtain Corollary \ref{cor:path-conn-algo}. 





\section{Cycle Contraction}
In this section, we give improved \FPT\ and exponential-time algorithms for \textsc{Cycle Contraction}. Let $n$ denote the number of vertices in $G$.

\subsection{An FPT Algorithm}
We show that \textsc{Cycle Contraction} can be solved in $\mathcal{O}^*((2 + \epsilon_{\ell})^k)$ time where $0 < \epsilon_{\ell}  \leq  0.5509$ is inversely proportional to $\ell = n - k$ using the algorithm for \textsc{Path Contraction With Constrained Ends} as a subroutine.  

Let $(G,k)$ be an instance of \textsc{Cycle Contraction}. If $k \geq n-3$, then $(G,k)$ is a \yes-instance if and only if $G$ is $k$-contractible to a triangle (which can be determined in polynomial time). Hence, we may assume that $k \leq n-4$ and so we wish to determine if $G$ is contractible to $C_\ell$ where $\ell \geq 4$. We begin with two basic observations. 

\begin{observation}
\label{obs:sumofws}
If $G$ is $k$-contractible to $C_{\ell}$ with witness structure $(W_1,\ldots,W_\ell)$, then $\sum_{i=1}^{\ell} (|W_i|-1) \leq k$.  
\end{observation}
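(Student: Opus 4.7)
The plan is to use the basic arithmetic of edge contractions together with the fact that the witness sets partition $V(G)$. First, I would recall from the preliminaries that to realize $C_\ell$ from $G$ via the witness structure $(W_1, \ldots, W_\ell)$, it suffices (and is necessary up to the choice of spanning trees) to contract a spanning tree of $G[W_i]$ for each $i$, since the inter-witness-set adjacencies already match the edges of $C_\ell$ by definition of a witness structure. A spanning tree of $G[W_i]$ has exactly $|W_i| - 1$ edges (with the convention that singleton witness sets contribute the empty spanning tree, i.e.\ $0$ edges).

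Next, I would observe that the sets $W_1, \ldots, W_\ell$ form a partition of $V(G)$, so $\sum_{i=1}^\ell |W_i| = n$. Hence the total number of edges in the union $F$ of these spanning trees is
\[
|F| = \sum_{i=1}^\ell (|W_i| - 1) = n - \ell.
\]
Since $G$ is assumed to be $k$-contractible to $C_\ell$, by definition there exists an edge set of size at most $k$ whose contraction yields $C_\ell$, and each contraction reduces the vertex count by exactly one. Therefore $n - \ell \le k$, giving $\sum_{i=1}^\ell (|W_i| - 1) \le k$, as required.

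There is essentially no obstacle here: the statement is just the conservation law ``each contraction loses exactly one vertex'' combined with the partition property of witness structures. The only thing to be careful about is the singleton convention (no spanning tree edges for $|W_i|=1$), which is explicitly set up in the preliminaries and makes the formula $|W_i|-1$ uniformly valid.
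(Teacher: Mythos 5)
Your proof is correct and is essentially the argument the paper intends (it states the observation without proof because it is immediate from the definition of $k$-contractibility): the witness sets partition $V(G)$, the union $F$ of their spanning trees has exactly $\sum_{i=1}^{\ell}(|W_i|-1)$ edges, and $k$-contractibility via this witness structure means $|F|\le k$. The extra remark that $\sum_i(|W_i|-1)=n-\ell$ and that each contraction drops the vertex count by one is a harmless equivalent restatement, not a different route.
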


\begin{observation}
\label{obs:group-of-small-witness-set}
If $G$ is $k$-contractible to $C_{\ell}$ where $\ell \geq 4$ with witness structure $(W_1,W_2,\ldots,W_\ell)$, then there is a set of three consecutive witness sets $W_i$, $W_{i+1}$, $W_{i+2}$ such that $|W_i|+ |W_{i+1}|+|W_{i+2}| \leq 3+ \frac{3k}{\ell}$. 
\end{observation}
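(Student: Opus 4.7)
The plan is to prove this via a straightforward averaging argument, exploiting the cyclic structure of the witness sets. The key input is Observation~\ref{obs:sumofws}, which gives $\sum_{i=1}^{\ell} (|W_i| - 1) \le k$, equivalently $\sum_{i=1}^{\ell} |W_i| \le k + \ell$.

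First, I would consider the $\ell$ triples of consecutive witness sets taken cyclically, namely $(W_i, W_{i+1}, W_{i+2})$ for $i = 1, \ldots, \ell$, where indices are understood modulo $\ell$ (this is justified since $W_1$ and $W_\ell$ are regarded as consecutive in a cycle witness structure, as noted in Section~\ref{sec:prelims}). Each witness set $W_j$ appears in exactly three of these triples (namely those indexed by $i = j-2, j-1, j$ modulo $\ell$). Therefore
\[
\sum_{i=1}^{\ell} \bigl(|W_i| + |W_{i+1}| + |W_{i+2}|\bigr) = 3 \sum_{j=1}^{\ell} |W_j| \le 3(k + \ell).
\]

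Next, I would invoke the averaging principle: since the $\ell$ triple-sums average to at most $3(k+\ell)/\ell = 3 + 3k/\ell$, at least one of them must be at most this average value. That index $i$ gives the desired triple of consecutive witness sets with $|W_i| + |W_{i+1}| + |W_{i+2}| \le 3 + \frac{3k}{\ell}$.

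There is no serious obstacle here; the only point to be careful about is the cyclic interpretation of ``consecutive'', which is appropriate since the target is a cycle rather than a path. If one insisted on a non-cyclic reading of consecutiveness (with $i \in \{1,\ldots,\ell-2\}$), a mild adjustment to the averaging constants would be needed, but the natural cyclic convention already in use in this paper yields the stated bound cleanly.
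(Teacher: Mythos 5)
Your proof is correct and is essentially the same argument as the paper's: the paper phrases it as a proof by contradiction (assuming all $\ell$ cyclic triple-sums exceed $3 + \frac{3k}{\ell}$ and summing to contradict Observation~\ref{obs:sumofws}), while you state the same counting directly as an averaging argument over the same $\ell$ cyclic triples. No substantive difference.
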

\begin{proof}
Suppose there is no set of three consecutive witness sets with the desired property. Then, for each $i \in [\ell-2]$, we have $|W_i|+ |W_{i+1}|+|W_{i+2}| > 3+ \frac{3k}{\ell}$. Further, $|W_{\ell-1}|+ |W_{\ell}|+|W_{1}| > 3+ \frac{3k}{\ell}$ and $|W_{\ell}|+ |W_{1}|+|W_{2}| > 3+ \frac{3k}{\ell}$. Summing up these inequalities leads to $3 \sum_{i=1}^{\ell} |W_i| > 3\ell+ \ell \frac{3k}{\ell}$ which in turn leads to $\sum_{i=1}^{\ell} |W_i| > k + \ell$. This leads to a contradiction by Observation~\ref{obs:sumofws}. \qed
\end{proof}

Observation \ref{obs:sumofws} immediately leads to a $\mathcal{O}^*(2^k)$-time algorithm for determining if $G$ is $k$-contractible to $C_{4}$ or not. 

\begin{lemma}
\label{lem:c4algo}
There is an algorithm that determines if $G$ is $k$-contractible to $C_{4}$ or not in $\mathcal{O}^*(2^k)$ time. 
\end{lemma}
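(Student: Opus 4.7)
The plan is to adapt the simple 2-coloring approach mentioned in the introduction to the specific target $C_4$, exploiting the fact that $C_4$ is bipartite so only two colors are needed. The first step is to apply Observation~\ref{obs:sumofws} to bound the input size: if $G$ is $k$-contractible to $C_4$ with witness structure $(W_1,W_2,W_3,W_4)$, then $n = \sum_{i=1}^{4}|W_i| \le k+4$. Hence the algorithm immediately returns \no\ if $n > k+4$; otherwise $2^n \le 16 \cdot 2^k = \calO(2^k)$.

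The algorithm then enumerates all $2^n$ red/blue colorings of $V(G)$. The key claim is that $G$ is $k$-contractible to $C_4$ if and only if some coloring satisfies the following three conditions: (a) the red-induced subgraph has exactly two connected components $R_1, R_2$; (b) the blue-induced subgraph has exactly two connected components $B_1, B_2$; and (c) in $G$, each $R_i$ is adjacent to each $B_j$. In the forward direction, a witness structure $(W_1,W_2,W_3,W_4)$ yields such a coloring by assigning red to $W_1 \cup W_3$ and blue to $W_2 \cup W_4$. In the reverse direction, conditions (a) and (b) yield four connected pairwise-disjoint witness sets partitioning $V(G)$, and since $R_1, R_2$ are distinct components of the red subgraph they are non-adjacent in $G$ (and similarly for $B_1, B_2$), so condition (c) forces the quotient graph to be exactly $C_4$, ordered as $R_1, B_1, R_2, B_2$. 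The contraction budget is automatically met: the number of edges contracted equals $n - 4 \le k$.

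Each coloring can be verified in polynomial time by computing the connected components of the two monochromatic subgraphs and counting cross-adjacencies, so the total running time is $\calO(2^n \cdot \mathrm{poly}(n)) = \calO^*(2^{k+4}) = \calO^*(2^k)$. The main thing to be careful about is the reverse direction of the characterization, specifically verifying that the non-adjacency of the two red pieces (and of the two blue pieces) in $G$ is automatic from them being distinct monochromatic connected components, so that no extra adjacency condition needs to be imposed in the check and conditions (a)--(c) genuinely capture all $C_4$-witness structures.
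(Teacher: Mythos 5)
Your proposal is correct and follows essentially the same route as the paper: bound $n \le k+4$ via Observation~\ref{obs:sumofws}, enumerate all $2^n = \calO(2^k)$ two-colorings, contract the monochromatic components, and test whether the quotient is $C_4$. You merely make explicit the verification step (two red components, two blue components, all four cross-pairs adjacent, with non-adjacency of same-colored components being automatic), which the paper leaves implicit in ``check if the resultant graph is $C_4$.''
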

\begin{proof}
If $G$ has more than $k+4$ vertices, then by Observation \ref{obs:sumofws}, $G$ is not $k$-contractible to $C_{4}$. Subsequently, assume $|V(G)|\leq k+4$. We iterate over all possible colorings of $V(G)$ using two colors and for each coloring, we contract each of the connected components of the two monochromatic subgraphs into a single vertex and check if the resultant graph is $C_4$. If no coloring results in $C_4$, then $G$ is not $k$-contractible to $C_{4}$. The set of colorings that results in $C_4$ are the set of $C_4$-witness structures of $G$ each of which uses at most $k$ edge contractions. As there are $2^{k+4}$ such colorings, the overall running time is $\mathcal{O}^*(2^k)$. \qed
\end{proof}

Next, we prove the following lemma on enumerating connected sets with restrictions on its size and its neighbourhood's size. 

\begin{lemma} \label{lem:ab-conn-sets-nbrs} 
Given integers $\alpha,\beta \in \mathbb{N}$, the number of triples $(X,Z,Y)$ where $Z$ is a $(\alpha,\beta)$-connected set and $X \uplus Y = N(Z)$ is $3^{\alpha + \beta} \cdot n$ and the set of all such triples can be enumerated in $\mathcal{O}^*(3^{\alpha + \beta})$ time.
\end{lemma}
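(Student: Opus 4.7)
The plan is to adapt the depth-bounded search tree procedure from the proof of Lemma~\ref{lem:conn-sets}, replacing the binary branching there with a ternary one that simultaneously grows the connected set $Z$ and partitions its neighborhood into $X$ and $Y$. The guiding intuition is that every vertex ever touched by the algorithm plays exactly one of three roles in the output triple: it belongs to $Z$, to the $X$-side of the neighborhood partition, or to the $Y$-side. Taking all three alternatives at each step directly yields the bound $3^{\alpha+\beta}$, which is sharper than the naive $2^{\alpha+\beta}\cdot 2^{\beta}$ one would obtain by first enumerating $(\alpha,\beta)$-connected sets via Proposition~\ref{prop:ab-conn-sets} and then bipartitioning their neighborhoods.

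First I would iterate over all seed vertices $v \in V(G)$ intended to lie in $Z$; this accounts for the outer factor of $n$. For each seed, I root a search tree at the label $(\{v\}, \emptyset, \emptyset)$, representing the partial triple $(Z', X', Y')$ built so far. At an internal node labeled $(Z', X', Y')$, I pick an arbitrary vertex $u \in N(Z') \setminus (X' \cup Y')$ and branch into the three children $(Z' \cup \{u\}, X', Y')$, $(Z', X' \cup \{u\}, Y')$, and $(Z', X', Y' \cup \{u\})$, pruning any child that would violate $|Z'| \le \alpha$ or $|X'| + |Y'| \le \beta$. The branching is exhaustive because $Z' \subseteq Z$ forces $u$ to lie in $Z$ or in $N(Z) = X \uplus Y$. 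The measure $\mu = (\alpha - |Z'|) + (\beta - |X'| - |Y'|)$ decreases by exactly one on every branch, so the tree has depth at most $\alpha + \beta$ and at most $3^{\alpha+\beta}$ leaves.

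At each leaf I emit the triple $(X', Z', Y')$ if and only if $|Z'| = \alpha$, $|X'| + |Y'| = \beta$, and $N(Z') = X' \cup Y'$; connectedness of $G[Z']$ is automatic since it is grown one boundary vertex at a time. Every valid triple is emitted at least once (for any seed $v \in Z$), and the total number of (seed, accepting leaf) pairs across the $n$ seeds is at most $n \cdot 3^{\alpha+\beta}$, which already proves the counting bound. A hash table keyed on $Z$ then deduplicates the enumeration within polynomial overhead per emitted triple, giving the advertised $\mathcal{O}^*(3^{\alpha+\beta})$ running time.

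I do not foresee a real obstacle: the argument is a textbook bounded-search-tree analysis, and the only delicate step is checking that the pruning rule together with the leaf condition simultaneously enforces $|Z'| = \alpha$, $|X'| + |Y'| = \beta$, and $N(Z') = X' \cup Y'$, which follows immediately from maintaining the invariant $\mu \ge 0$ along every root-to-leaf path.
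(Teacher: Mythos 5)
Your proof is correct and follows essentially the same approach as the paper: a seed vertex accounting for the factor $n$, a ternary bounded search tree branching on whether a boundary vertex goes to $Z$, $X$, or $Y$, with the measure $(\alpha+\beta)-|Z'|-|X'\cup Y'|$ dropping by one per branch, giving at most $3^{\alpha+\beta}$ leaves per seed. Nothing further is needed.
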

\begin{proof}
We use a simple depth-bounded search tree algorithm for enumerating the required triples. Each node of the search tree is labelled as $(X,Z,Y)$ where $|Z| \leq \alpha$ and $|X \cup Y| \leq \beta$. A node $(X,Z,Y)$ is associated with a measure $\mu=(\alpha+\beta)-|Z|-|X \cup Y|$. 

Guess a vertex $v \in Z$. The root of the search tree is $(X=\emptyset,Z=\{v\},Y=\emptyset)$ with measure $\alpha-1+\beta$. At a node $(X,Z,Y)$, we choose an arbitrary vertex $u \in N(Z) \setminus (X \cup Y)$ and branch into three possibilities: $u \in Z$ or $u \in X$ or $u \in Y$ resulting in three new nodes. This branching is clearly exhaustive and the measure drops by one in each of the three branches. The depth of the tree is at most $\alpha+\beta$ and the number of leaves is at most $3^{\alpha + \beta}$. The leaves $(X,Z,Y)$ where $X \cup Y = N(Z)$, $|Z|=\alpha$, $|X \cup Y|=\beta$ correspond to the required triples such that $v \in Z$. 

Then, we run this branching algorithm for all choices of $v$. Thus, the total number of triples $(X,Z,Y)$ where $Z$ is a $(\alpha,\beta)$-connected set and $X \uplus Y = N(Z)$ is at most $3^{\alpha + \beta} \cdot n$.
Further, the set of all such triples can be enumerated in $\mathcal{O}^*(3^{\alpha + \beta})$ time.\qed
\end{proof}

Now, we have the following result. 

\fptalgocycle*
\begin{proof}
Consider an instance $(G,k)$ of \textsc{Cycle Contraction}. Let $\ell = n - k$. If $\ell = 3$, then we can determine if $G$ is $k$-contractible to $C_3$ or not in polynomial time. If $\ell = 4$, then we can determine if $G$ is $k$-contractible to $C_4$ or not using Lemma \ref{lem:c4algo} in $\mathcal{O}^*(2^k)$ time. Subsequently, we may assume that $\ell \geq 5$. Suppose $G$ is $k$-contractible to $C_{\ell}$ with witness structure $(W_1,W_2,\ldots,W_\ell)$. From Observation \ref{obs:group-of-small-witness-set}, we know that there is a set of three consecutive witness whose total size is at most $3+ \frac{3k}{\ell}$. Without loss of generality, we may assume that these sets are $W_1, W_2$ and $W_3$. Observe that $W_1, W_2$ and $W_3$ are connected sets with $N(W_2) \subseteq W_1 \cup W_2$. Let $X = W_1 \cap N(W_2)$ and $Y = W_3 \cap N(W_2)$. Once $W_2$, $X$ and $Y$ is known, the problem reduces to solving a \textsc{Path Contraction With Constrained Ends} instance of determining if $G-W_2$ is $k'$-contractible to a path with witness structure $(W'_1,W'_2,\ldots,W'_q)$ where $X \subseteq W'_1$ and $Y \subseteq W'_q$ where $k'=k-(|W_2|-1)$. Such a path witness structure $(W'_1,W'_2,\ldots,W'_q)$ along with $W_2$ is a cycle witness structure of $G$. 

An important subroutine in this algorithm is an efficient enumeration of $(X,W_2,Y)$ given by Lemma \ref{lem:ab-conn-sets-nbrs}. For each choice of $(X,W_2,Y)$, the corresponding \textsc{Path Contraction With Constrained Ends} problem on $G-W_2$ can be solved in $\mathcal{O}^*(2^{k'-|X|-|Y|})$ time where $k'=k-(|W_2|-1)$ by Theorem \ref{thm:path-ends-fpt-algo}. The number of choices of $(X,W_2,Y)$ with $|W_2|=\alpha$, $|X|+|Y| =\beta$ where $\alpha+\beta \leq 3+ \frac{3k}{\ell}$ is $\mathcal{O}^*(3^{\alpha+\beta})$ by Lemma \ref{lem:ab-conn-sets-nbrs}. For each such choice of $(X,W_2,Y)$, the corresponding \textsc{Path Contraction With Constrained Ends} problem on $G-W_2$ can be solved in $\mathcal{O}^*(2^{k'-|X|-|Y|})$ time where $k'=k-(|W_2|-1)$ by Theorem \ref{thm:path-ends-fpt-algo}. 

The total running time is $\mathcal{O}^*(3^{\frac{3k}{\ell}} \cdot 2^{k-{\frac{3k}{\ell}}})$ which is $\mathcal{O}^*((1.5^{c} \cdot 2)^k)$ where $0 \leq c \leq \frac{3}{\ell}$ and $\ell \geq 5$. Thus, the algorithm for \textsc{Cycle Contraction} runs in $\mathcal{O}^*((2 + \epsilon_{\ell})^k)$ time where $0 < \epsilon_{\ell}  \leq  0.5509$ and $\epsilon_\ell$ is inversely proportional to $\ell = n - k$. \qed
\end{proof}

\subsection{An Exponential-time Algorithm}
Now, we address the optimization version of \textsc{Cycle Contraction} where the objective is contracting the input graph $G$ to the largest cycle possible. By Theorem~\ref{thm:fpt-algo-cycle}, \textsc{Cycle Contraction} can be solved in $\mathcal{O}^*(2.5509^k)$ time. Since $1\leq k \leq n-1$, by invoking this algorithm for different values of $k$, we can determine the largest cycle to which $G$ can be contracted to in $\mathcal{O}^*(2.5509^n)$ time. In this section, we describe a faster algorithm for this problem. 

We begin with the following observation on contracting a graph to the largest even cycle. This is similar to Lemma~\ref{lem:c4algo}. 

\begin{lemma}
\label{lem:even-cycle-exact}
There is an algorithm that takes as input a graph $G$ on $n$ vertices and determines the longest even cycle $C_{\ell}$ to which $G$ is contractible to in $\mathcal{O}^*(2^n)$ time. Further, the set of all $C_{\ell}$-witness structures of $G$ can be enumerated in $\mathcal{O}^*(2^n)$ time.
\end{lemma}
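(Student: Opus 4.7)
The plan is to extend the two-coloring idea used in Lemma~\ref{lem:c4algo} from $C_4$ to arbitrary even cycles, exploiting the fact that even cycles are precisely the bipartite cycles. Suppose $G$ is contractible to $C_\ell$ with witness structure $(W_1, \ldots, W_\ell)$ where $\ell$ is even. Define a coloring $\varphi \colon V(G) \to \{\redcolor, \bluecolor\}$ by setting $\varphi(v) = \redcolor$ when $v \in W_i$ with $i$ odd and $\varphi(v) = \bluecolor$ when $v \in W_i$ with $i$ even. Then (i) each witness set is monochromatic, and (ii) adjacent witness sets receive opposite colors; in particular $W_\ell$ and $W_1$, which are consecutive in the cycle, are assigned different colors, which is consistent precisely when $\ell$ is even. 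Conversely, starting from any two-coloring $\varphi$ of $V(G)$, contracting each connected component of the two monochromatic induced subgraphs into a single vertex yields a well-defined graph $G_\varphi$, and if $G_\varphi$ is a cycle, then it inherits a proper 2-coloring and must therefore have even length.

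Given this correspondence, the algorithm simply iterates over all $2^n$ two-colorings of $V(G)$. For each coloring $\varphi$, it constructs $G_\varphi$ in polynomial time, tests whether $G_\varphi$ is a cycle, and (if so) records its length. Returning the maximum length encountered yields the longest even cycle $C_\ell$ to which $G$ is contractible; if no coloring produces a cycle then $G$ is not contractible to any even cycle. The running time is $\mathcal{O}^*(2^n)$.

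For the enumeration statement, note that a $C_\ell$-witness structure $(W_1, \ldots, W_\ell)$ with $\ell$ even gives rise to exactly two colorings (obtained by swapping the two colors and, modulo the dihedral symmetry of the cycle, shifting indices); dually, a coloring yielding $G_\varphi \cong C_\ell$ uniquely determines the witness structure up to this symmetry, since each witness set must be a monochromatic connected component. Hence by retaining every coloring that produces a cycle of maximum even length and translating each into its associated witness structure, we enumerate all $C_\ell$-witness structures of $G$ in $\mathcal{O}^*(2^n)$ time.

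The only delicate point is the correctness of the two directions of the correspondence. The forward direction (every even-cycle witness structure induces a valid coloring) is immediate from the construction. The reverse direction (every coloring that contracts to a cycle yields a genuine witness structure of that cycle) requires checking that each monochromatic connected component, being contracted to a distinct vertex of $G_\varphi$, forms a connected set whose external neighborhood corresponds exactly to the neighbors in $G_\varphi$; this follows directly from the definition of edge contraction applied to each monochromatic component. No subtler combinatorial argument is needed.
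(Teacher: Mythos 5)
Your proposal is correct and follows essentially the same approach as the paper: iterate over all $2^n$ two-colorings, contract the monochromatic connected components, and keep the colorings yielding the longest even cycle. The extra observations you supply (that a cycle obtained this way is automatically even because distinct same-colored components are never adjacent, and that colorings and witness structures correspond up to color swap) are correct refinements of the paper's terser argument, not a different route.
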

\begin{proof}
We iterate over all possible colorings of $V(G)$ using two colors and for each coloring, we contract each of the connected components of the two monochromatic subgraphs into a single vertex and check if the resultant graph is an even cycle. The set of colorings that results in a largest even cycle is the required set of witness structures of $G$. As there are $2^n$ such colorings, the overall running time is $\mathcal{O}^*(2^n)$. \qed
\end{proof}

Observe that if $C_\ell$ is the largest even cycle to which $G$ can be contracted to, then the largest  cycle to which $G$ can be contracted to is either $C_\ell$ or $C_{\ell+1}$. Without loss of generality, assume $\ell \geq 4$. Suppose $G$ is contractible to $C_{\ell+1}$ with $(W_1,W_2,W_3,\ldots,W_\ell,W_{\ell+1})$ being the corresponding witness structure. Then, observe that by contracting one more edge with one endpoint in $W_i$ and the other endpoint in $W_{i+1}$, we get a $C_\ell$-witness structure $(W_1, W_2,W_3,\ldots,W_i \cup W_{i+1},\ldots,W_\ell,W_{\ell+1})$ of $G$. Therefore, one might be tempted to obtain a $C_{\ell+1}$-witness structure by using $C_{\ell}$-witness structures obtained from Lemma~\ref{lem:even-cycle-exact}. There are two challenges in this approach. First, we do not know which is the ``correct'' $C_{\ell}$-witness structure that leads to a $C_{\ell+1}$-witness structure. Second, even if we know the correct $C_{\ell}$-witness structure, we do not know which edge to ``uncontract'' in order to get a $C_{\ell+1}$-witness structure. The first challenge can be handled by enumerating all $C_{\ell}$-witness structures using Lemma~\ref{lem:even-cycle-exact} and checking each of them one by one. To handle the second challenge, we observe the following.

\begin{observation}
\label{obs:two-small-witness-set}
If $G$ is contractible to the odd cycle $C_{\ell+1}$ where $\ell \geq 4$ with witness structure $(W_1,W_2,\ldots,W_{\ell+1})$, then is a set of two consecutive witness sets $W_i$, $W_{i+1}$ such that $|W_i|+ |W_{i+1}| \leq \frac{2n}{\ell+1}$. 
\end{observation}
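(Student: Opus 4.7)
The plan is a standard averaging argument in the spirit of the proof of Observation~\ref{obs:group-of-small-witness-set}, but now over pairs of consecutive witness sets instead of triples. I would first note that since $\{W_1, W_2, \ldots, W_{\ell+1}\}$ is a partition of $V(G)$, we have $\sum_{i=1}^{\ell+1} |W_i| = n$. Consider the $\ell + 1$ pairs of consecutive witness sets in the cycle, namely $(W_1, W_2), (W_2, W_3), \ldots, (W_\ell, W_{\ell+1}), (W_{\ell+1}, W_1)$, using the convention from Section~\ref{sec:prelims} that $W_{\ell+1}$ and $W_1$ are consecutive.

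Next, I would sum $|W_i| + |W_{i+1}|$ over all $\ell + 1$ such pairs (indices taken modulo $\ell+1$). Each witness set $W_j$ appears in exactly two of these pairs, namely the pair with $W_{j-1}$ and the pair with $W_{j+1}$. Hence
\[
\sum_{i=1}^{\ell+1} \bigl(|W_i| + |W_{i+1}|\bigr) \;=\; 2 \sum_{i=1}^{\ell+1} |W_i| \;=\; 2n.
\]
I would then finish by a pigeonhole/averaging step: since the sum of $\ell + 1$ non-negative quantities equals $2n$, at least one of them is at most the average $\frac{2n}{\ell+1}$. Equivalently, supposing for contradiction that $|W_i| + |W_{i+1}| > \frac{2n}{\ell+1}$ for every $i \in [\ell+1]$ (cyclically), we would obtain $2n = \sum_{i=1}^{\ell+1} (|W_i| + |W_{i+1}|) > (\ell+1) \cdot \frac{2n}{\ell+1} = 2n$, which is a contradiction.

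I do not expect any real obstacle here; the proof is essentially a cyclic averaging argument, and the condition $\ell \geq 4$ is only used to ensure that $\ell + 1 \geq 5$ consecutive pairs are well-defined and distinct (so that speaking of ``two consecutive witness sets'' is meaningful in the standard cyclic sense). The hypothesis that $C_{\ell+1}$ is odd plays no role in the calculation itself and is merely the context in which this observation will later be applied.
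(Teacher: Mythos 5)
Your proof is correct and follows essentially the same route as the paper: the paper also sums the $\ell+1$ cyclic inequalities $|W_i|+|W_{i+1}| > \frac{2n}{\ell+1}$ to obtain $2\sum_i |W_i| > 2n$, contradicting $\sum_i |W_i| = n$. Your observations that each witness set appears in exactly two pairs and that oddness of the cycle is irrelevant to the calculation are both accurate.
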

\begin{proof}
Assume on the contrary that there is no set of two consecutive witness sets with the desired property. Then, for each $i \in [\ell]$, we have $|W_i|+ |W_{i+1}| > \frac{2n}{\ell+1}$. Further, $|W_{\ell+1}|+ |W_{1}| > \frac{2n}{\ell+1}$. Summing up these inequalities leads to $2 \sum_{i=1}^{\ell+1} |W_i| > 2n$ which leads to a contradiction. \qed
\end{proof}

Now, once we have a $C_\ell$-witness structure $(W_1,W_2,\ldots,W_{\ell})$ of $G$, we need to go over each witness set $W_i$ whose size is at most $\frac{2n}{\ell+1}$ with the interpretation that $W_i$ is to be partitioned into two connected sets resulting in a $C_{\ell+1}$-witness structure of $G$. Consider such a candidate set $W_i$. Let $P=N(W_{i-1}) \cap W_i$ and $Q=N(W_{i+1}) \cap W_i$. Now, the task is to determine a partition $W'_i \uplus W''_i$ of $W_i$ with $P \subseteq W'_i$ and $Q \subseteq W''_i$ such that $W'_i$ and $W''_i$ are connected. This is known as the \textsc{2-Disjoint Connected Subgraphs} problem and is known to admit an algorithm with $\mathcal{O}^*(1.7804^{|W_i|})$ running time \cite{TelleV13}. Such a solution partition guarantees that $W'_i$ and $W''_i$ are adjacent as $W'_i \cup W''_i$ is connected. Then, $(W_1, W_2,W_3,\ldots,W_{i-1},W'_i,W''_{i}, W_{i+1},\ldots,W_\ell)$ is a $C_{\ell+1}$-witness structure of $G$ ascertaining that $G$ is contractible to $C_{\ell+1}$. If no $C_\ell$-witness structure  of $G$ leads to a $C_{\ell+1}$-witness structure, then we conclude that $G$ is not contractible to $C_{\ell+1}$. 

The overall running time of this algorithm is $\mathcal{O}^*(2^n 1.7804^{2n/5})$ and we have the following result.

\begin{theorem}
\label{thm:exact-algo-cycle}
\textsc{Cycle Contraction} can be solved in $\mathcal{O}^*(2.5191^n)$ time.
\end{theorem}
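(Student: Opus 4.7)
The plan is to combine three ingredients already assembled in the paper: the $\mathcal{O}^*(2^n)$-time enumeration of witness structures for the longest even cycle from Lemma~\ref{lem:even-cycle-exact}, the structural bound of Observation~\ref{obs:two-small-witness-set}, and the known $\mathcal{O}^*(1.7804^n)$-time algorithm for \textsc{2-Disjoint Connected Subgraphs}. The governing observation is that the largest cycle to which $G$ contracts is either the longest even cycle $C_\ell$ or the odd cycle $C_{\ell+1}$ one larger, so once $C_\ell$ is known the remaining question is purely whether $G$ contracts to $C_{\ell+1}$.

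First, I would dispose of the trivial regime in polynomial time by checking whether $G$ is contractible to $C_3$ (and returning the answer if this is the best possible). Then, I would invoke Lemma~\ref{lem:even-cycle-exact} to obtain, in $\mathcal{O}^*(2^n)$ time, the largest $\ell \geq 4$ for which $G$ contracts to $C_\ell$, together with the collection $\mathcal{S}$ of all such witness structures. To decide contractibility to $C_{\ell+1}$, I would iterate over each $(W_1,\ldots,W_\ell) \in \mathcal{S}$ and each index $i \in [\ell]$ with $|W_i| \leq 2n/(\ell+1)$, set $P := N(W_{i-1}) \cap W_i$ and $Q := N(W_{i+1}) \cap W_i$ (indices modulo $\ell$), and run the \textsc{2-Disjoint Connected Subgraphs} subroutine on $G[W_i]$ with terminal sets $P$ and $Q$. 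A resulting partition $W_i = W'_i \uplus W''_i$ with $P \subseteq W'_i$, $Q \subseteq W''_i$ and both sides connected yields a valid $C_{\ell+1}$-witness structure $(W_1,\ldots,W_{i-1},W'_i,W''_i,W_{i+1},\ldots,W_\ell)$, because $W'_i$ and $W''_i$ are adjacent (their union $W_i$ is connected in $G$), and the remaining adjacency relations between $W'_i, W''_i$ and $W_{i-1}, W_{i+1}$ are enforced by the choice of $P$ and $Q$.

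Correctness in the other direction—i.e., that restricting to small candidates $W_i$ loses nothing—comes from Observation~\ref{obs:two-small-witness-set} applied to any hypothetical $C_{\ell+1}$-witness structure $(U_1,\ldots,U_{\ell+1})$: some pair $U_j, U_{j+1}$ satisfies $|U_j| + |U_{j+1}| \leq 2n/(\ell+1)$, and merging them into one set $W_i := U_j \cup U_{j+1}$ produces a $C_\ell$-witness structure of $G$, which must lie in $\mathcal{S}$ since $C_\ell$ is the longest even cycle. For the running time, enumerating $\mathcal{S}$ costs $\mathcal{O}^*(2^n)$, there are at most $\ell$ candidate indices per structure, and the \textsc{2-Disjoint Connected Subgraphs} call costs $\mathcal{O}^*(1.7804^{|W_i|}) \leq \mathcal{O}^*(1.7804^{2n/(\ell+1)})$. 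Using $\ell + 1 \geq 5$ and the numerical estimate $2 \cdot 1.7804^{2/5} < 2.5191$, we obtain the overall bound $\mathcal{O}^*(2^n \cdot 1.7804^{2n/5}) = \mathcal{O}^*(2.5191^n)$.

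The main obstacle I anticipate is a clean correctness argument that the ``merge two consecutive witness sets'' operation actually produces a structure lying in $\mathcal{S}$ (one must verify that the merged set is connected and that the resulting $\ell$-tuple still realizes $G$ as contractible to $C_\ell$). The other delicate point is bookkeeping for the edge cases where $G$ has no cycle contraction at all, or the optimal target is $C_3$; both are absorbed into the preliminary polynomial-time test and the definition of $\ell$ via Lemma~\ref{lem:even-cycle-exact}.
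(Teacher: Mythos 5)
Your proposal follows essentially the same route as the paper: enumerate all longest-even-cycle witness structures via Lemma~\ref{lem:even-cycle-exact}, use Observation~\ref{obs:two-small-witness-set} to restrict attention to a witness set of size at most $2n/(\ell+1)$, and split it with the $\mathcal{O}^*(1.7804^{|W_i|})$-time \textsc{2-Disjoint Connected Subgraphs} subroutine, yielding the same $\mathcal{O}^*(2^n \cdot 1.7804^{2n/5}) = \mathcal{O}^*(2.5191^n)$ bound. The correctness concerns you flag (that merging two consecutive witness sets of a $C_{\ell+1}$-witness structure yields a connected set and a genuine $C_\ell$-witness structure in the enumerated collection) are exactly the points the paper relies on as well, and they go through.
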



\section{Planar Graphs and Chordal Graphs} 
In this section, we discuss the complexity of \textsc{Path Contraction} and \textsc{Cycle Contraction} on planar graphs and chordal graphs. 

\subsection{Planar Graphs: Path Contraction is Polynomial-time Solvable}
One of the most important special graph classes in the realm of contraction problems is the class of planar graphs. {\em Planar graphs} are graphs that can be drawn on the plane such that no pair of edges intersect except possibly at their endpoints. Such a drawing is called a {\em planar embedding}. Edge contractions play an important role in planar graph theory and planarity is preserved under edge contractions. Hammack~\cite{Hammack99} gave a polynomial-time algorithm that determines the largest cycle to which the given planar graph is contractible to. To the best of our knowledge, the status of \textsc{Path Contraction} in planar graphs is open. In this section, we give a polynomial-time algorithm for this problem that uses the \textsc{Cycle Contraction} algorithm of \cite{Hammack99} as a subroutine.

\pathplanar*
\begin{proof}
Consider an instance $(G,k)$ of \textsc{Path Contraction} where $G$ is a connected planar graph with a planar embedding fixed. If $|V(G)| < k+2$, $(G,k)$ is a trivial \yes-instance. Subsequently, assume $|V(G)| \geq k+2$. Suppose $G$ is $k$-contractible to a path with witness structure $(W_1, W_2, \ldots, W_q)$. Then, there are vertices $x \in W_1$ and $y \in W_q$ that are in the same face of the embedding. Construct a graph $G_{x,y}$ by adding a path $(z_1,\ldots,z_{k+1})$ of $k+1$ new vertices to $G$ such that $z_1$ is adjacent to $x$ and $z_{k+1}$ is adjacent to $y$. Observe that $G_{x,y}$ is also planar. We claim that $G$ is $k$-contractible to a path with witness structure $(W_1, W_2, \ldots, W_q)$ such that $x \in W_1$ and $y \in W_q$ if and only if $G_{x,y}$ is $k$-contractible to a cycle.  

Suppose $G$ is $k$-contractible to a path with witness structure $(W_1, W_2, \ldots, W_q)$ such that $x \in W_1$ and $y \in W_q$. Then, $(W_1, W_2, \ldots, W_q, \{z_{k+1}\},\{z_k\},\ldots,\{z_1\})$ is a cycle witness structure of $G_{x,y}$ in which at most $k$ edges are contracted. In other words, $G_{x,y}$ is $k$-contractible to a cycle. Conversely, suppose $G_{x,y}$ is $k$-contractible to a cycle with witness structure $(W_1, W_2, \ldots, W_q)$. Let $Z=\{z_1,\ldots,z_{k+1}\}$. Let us first consider the case when $x$ and $y$ are in the same witness set. Without loss of generality, assume that $x,y \in W_1$. If $Z \subseteq W_1$, we get a contradiction since more than $k$ edges are contracted. Therefore, we may assume that $Z \cap W_2 \neq \emptyset$. Let $W_i$ denote a witness set containing a vertex of $V(G_{x,y}) \setminus (Z \cup \{x,y\})$ for $i>1$. Such a set exists as $|V(G)| \geq k+2$. Since $N(Z)=\{x,y\}$, it follows that there is no path from $W_2$ to $W_i$ which leads to a contradiction. Therefore, we may conclude that $x$ and $y$ are in different witness sets. Let $x \in W_1$ and $y \in W_p$ for some $1<p \leq q$. Since $N(Z)=\{x,y\}$, all vertices of $Z$ are in $\bigcup_{i=1}^{p}W_i$ or in $\bigcup_{i=p}^{q}W_i \cup W_1$. Assuming the latter case without loss of generality, it follows that $(W_1 \setminus Z, W_2, \ldots, W_p \setminus Z)$ is a path witness structure of $G$ where at most $k$ edges are contracted. That is, $G$ is $k$-contractible to a path.

Now, in order to determine if $G$ is $k$-contractible to a path, we go over all possible choices of $x$ and $y$, construct $G_{x,y}$, use the algorithm of Hammack~\cite{Hammack99} to determing if $G_{x,y}$ is $k$-contractible to a cycle or not which in turn by determines if $G$ is $k$-contractible to a path or not. As the number of choices for $x,y$ is at most the number of pairs of vertices of $G$ that are in the same face, we obtain a polynomial-time algorithm for \textsc{Path Contraction} on planar graphs.\qed
\end{proof}

\subsection{Chordal Graphs: Fine-grained Complexity of Path Contraction}
Now, we move on to chordal graphs. Chordal graphs are graphs in which every induced cycle is a triangle. Chordality is a hereditary property and edge contractions preserve chordality. The class of chordal graphs subsumes many popular graph classes like complete graphs, trees, $k$-trees, interval graphs and split graphs. Chordal graphs can be recognized in linear time and are amenable to linear time algorithms for several problems that are \NP-complete in general like \textsc{Vertex Cover}, \textsc{Vertex Coloring}, \textsc{Clique Cover} and \textsc{Clique}. A well-known characterization of chordal graphs is via {\em clique trees}. A {\em clique tree} of a connected graph $G$ is a tree that has as vertices the maximal cliques of $G$ and has edges such that each subgraph induced by those cliques that contain a particular vertex of $G$ is a subtree. If a graph $G$ has a clique tree $T$, then $T$ is also a tree decomposition of $G$ with bags as the maximal cliques of $G$ and the treewidth of $G$ is the number of vertices in a maximum clique minus one\footnote[2]{The treewidth of a graph measures how close the graph is to a tree. See \cite{CyganFKLMPPS15} for the definitions of tree decomposition and treewidth}. It is well-known that $G$ is chordal if and only if $G$ has a clique tree~\cite{Gavril74}. Further, a clique tree of a chordal graph can be computed in linear time. Moreover, the number of maximal cliques of a chordal graph on $n$ vertices is at most $n$. For further details, we refer the reader to the book by Golumbic~\cite{Golumbic80}.  

It is easy to verify that \textsc{Cycle Contraction} is linear-time solvable on chordal graphs -- a chordal graph is contractible to a cycle if and only if it is contractible to a triangle. Heggernes et al.~\cite{HeggernesHLP14} proved that \textsc{Path Contraction} is polynomial-time solvable on chordal graphs by giving an $\mathcal{O}(nm)$-time algorithm where $m$ is the number of edges in the input graph. This algorithm is an $\mathcal{O}(n^2 \cdot tw)$-time algorithm where $tw$ is the treewidth of $G$. We give a fine-grained reduction from \textsc{Orthogonal Vectors} to \textsc{Path Contraction} on chordal graphs to show that the latter problem has no $\mathcal{O}(n^{2-\epsilon} \cdot 2^{o(tw)})$-time algorithms under the Orthogonal Vectors Conjecture. In the \textsc{Orthogonal Vectors} problem, given two sets of $n$ $d$-dimensional boolean vectors $X=\{x_1, x_2, \ldots, x_n\}$ and $Y=\{y_1, y_2, \ldots, y_n\}$, the objective is to determine if there is a pair of vectors $x_i \in X$ and $y_j \in Y$ that are orthogonal, i.e., $\langle x_i \cdot y_j \rangle= 0$. It is easy to see that \textsc{Orthogonal Vectors} can be solved in $\calO(n^2 d)$ time. The Orthogonal Vectors Conjecture~\cite{WilliamsY14} states that \textsc{Orthogonal Vectors} cannot be solved in $\calO(n^{2-\epsilon})$ time for any $\epsilon > 0$ when $d=\mathcal{O}(\log n)$.



%

\chordallb*
\begin{proof}
   We first give a polynomial-time reduction from \textsc{Orthogonal Vectors} to \textsc{Path contraction} on chordal graphs. Consider an instance $(X',Y',n,d)$ of \textsc{Orthogonal Vectors} where $d=\mathcal{O}(\log n)$.  We construct a graph $G$ as follows. 
    \begin{itemize}
        \item For every $x \in X'$, add a vertex $x$ and for every $y \in Y'$, add a vertex $y$. Let $X=\{x_1,\ldots,x_n\}$ and $Y=\{y_1,\ldots,y_n\}$. 
        \item Add a clique $Z=\{z_X,z_1,z_2, \ldots, z_d,z_Y\}$ such that $z_X$ is adjacent to every vertex in $X$ and $z_Y$ is adjacent to every vertex in $Y$. 
        \item For each $x \in X'$ and $i \in [d]$ with $x(i)=1$, add an edge between $x$ and $z_i$. 
        \item For each $y \in Y'$ and $i \in [d]$ with $y(i)=1$, add an edge between $y$ and $z_i$. 
    \end{itemize}
    See Figure~\ref{fig:ov-path} for an illustration. 
    \begin{figure}[ht]
    \centering
    \includegraphics[scale=.35]{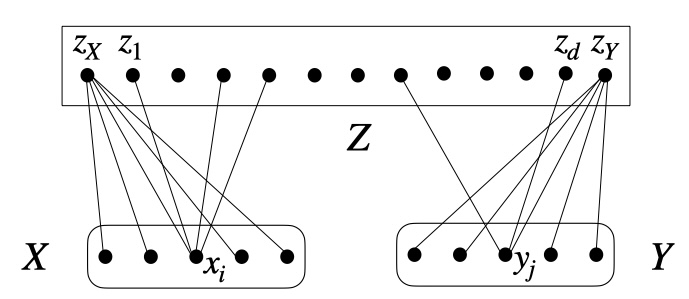}
    \caption{The (chordal) graph $G$ in the reduction from \textsc{Orthogonal Vectors} to \textsc{Path Contraction} where $X \cup Y$ is an independent set and $Z$ is a clique.}
     \label{fig:ov-path}
    \end{figure}
    Observe that $V(G)$ can be partitioned into $X \cup Y$ and $Z$ where $X \cup Y$ is an independent set and $Z$ is a clique. Therefore, $G$ is a split graph which is in turn chordal. Note that the diameter of $G$ is at most 3. It is easy to verify that the reduction takes $\mathcal{O}(nd+d^2)$ time and $G$ is connected.  
    
    We claim that there is a pair of orthogonal vectors $x_i \in X'$ and $y_j \in Y'$ if and only if $G$ is $k$-contractible to a path where $k=2n+d+2-4$. Suppose $\langle x_i \cdot y_j \rangle= 0$. Then, there is no index $\ell$ such that $x_i(\ell)=y_j(\ell)=1$. Let $Z_1 =N(x_i)$, $Z_2 =N(y_j)$ and $Z_3=Z \setminus (Z_1 \cup Z_2)$. Define four sets $W_1, W_2, W_3, W_4$ where $W_1=\{x_i\}$, $W_4=\{y_j\}$, $W_2=Z_1 \cup Z_3 \cup (X \setminus \{x_i\})$ and $W_3=Z_2 \cup (Y \setminus \{y_j\})$. By construction of $G$ and due to the fact that $<x_i \cdot y_j >= 0$, $W_1, W_2, W_3, W_4$ form a partition of $V(G)$. Also, $E(W_1,W_4)= \emptyset$, $E(W_1,W_3)= \emptyset$ and $E(W_2,W_4)= \emptyset$. The sets $W_1$ and $W_4$ are singletons and $z_X \in W_2$ and $z_Y \in W_3$. Thus, each $W_i$ is a connected set for $i\in[4]$. Further, $E(W_1,W_2)$ and $E(W_3,W_4)$ are non-empty since $x_i$ and $y_j$ are not the zero vector. Moreover, $E(W_2,W_3)$ is non-empty since $z_X \in W_2$ and $z_Y \in W_3$. Therefore, $(W_1,W_2,W_3,W_4)$ is a $P_4$-witness structure of $G$. In other words, $G$ is $k$-contractible to a path. 

    Conversely, suppose $G$ is $k$-contractible to a path. Then such a path should have exactly 4 vertices since $dia(G) \leq 3$, $k=2n+d+2-4$ and $|V(G)|=2n+d+2$. Let $(W_1, W_2, W_3, W_4)$ be a $P_4$-witness structure of $G$. As $Z$ is a clique, the vertices in $Z$ are in at most two (consecutive) witness sets. If $Z \subseteq W_1 \cup W_2$, then $W_4=\emptyset$ since $V(G) \setminus Z$ is an independent set. Similarly, if $Z \subseteq W_3 \cup W_4$, then $W_1=\emptyset$. Therefore, $Z \subseteq W_2 \cup W_3$. This implies that $(X \cup Y) \cap W_1 \neq \emptyset$ and $(X \cup Y) \cap W_4 \neq \emptyset$. Further, if $X \cap W_1 \neq \emptyset$ (implying that $z_X \in W_2$), then $X \cap W_4 = \emptyset$. Similarly, if $Y \cap W_1 \neq \emptyset$ (implying that $z_Y \in W_2$), then $Y \cap W_4 = \emptyset$. Therefore, without loss of generality, we may assume that $X \cap W_4 = \emptyset$ and $Y \cap W_1 = \emptyset$. Then, $W_1$ and $W_4$ are singletons as $N(X) \subseteq Z$ and $N(Y) \subseteq Z$. Let $W_1=\{x_i\}$ and $W_2=\{y_j\}$. We claim that $\langle x_i \cdot y_j \rangle= 0$. If there is an index $\ell$ such that $x_i(\ell)=y_j(\ell)=1$, then $x_i$ and $y_j$ are adjacent to $z_\ell$. This results in $E(W_1,W_3)\neq \emptyset$ or $E(W_2,W_4)\neq \emptyset$ which leads to a contradiction.

    Suppose there is an algorithm for \textsc{Path Contraction} on chordal graphs that runs in $\mathcal{O}(n^{2-\epsilon} \cdot 2^{o(tw)})$ time for some $\epsilon >0$. 
    Given an instance $(X,Y,n,d)$ of \textsc{Orthogonal Vectors} where $d=\mathcal{O}(\log n)$, we first construct the chordal graph $G$ as detailed above in $\mathcal{O}(nd+d^2)$ time and then determine if $G$ can be contracted to a $P_4$ in $\mathcal{O}(n^{2-\epsilon} \cdot 2^{o(tw)})$. As the treewidth of $G$ is $d+2$ which is $\mathcal{O}(\log n)$, this algorithm runs in time $\mathcal{O}(n^{2-\epsilon'} \cdot 2^{o(\log n)})$ which is $\mathcal{O}(n^{2-\epsilon'} \cdot n^{o(1)})$ for some $\epsilon' >0$. Thus, we solve the instance $(X,Y,n,d)$ of \textsc{Orthogonal Vectors} in $\mathcal{O}(n^{2-\delta})$ time for some $\delta >0$. This contradicts the Orthogonal Vectors Conjecture.\qed
\end{proof}

This result shows that the $\mathcal{O}(n^2 \cdot tw)$-time algorithm is optimal -- if one wishes to reduce the quadratic dependency on $n$, an exponential overhead in terms of treewidth is inevitable. We remark that this fine-grained complexity analysis is non-trivial only when the input graph is sparse; if $m = \mathcal{O}(n^2)$, then we need to spend $\Omega(n^2)$ time just to read the graph.

\section{Concluding Remarks}
In this paper, we revisited \textsc{Path Contraction} and \textsc{Cycle Contraction} from two perspectives -- the first one focusing on the parameterized complexity and the second one on polynomial-time solvability in special graph classes. We gave improved \FPT\ algorithms for both the problems using a dynamic programming algorithm that solves \textsc{Path Contraction With Constrained Ends} as a subroutine. Then, we showed how to solve \textsc{Path Contraction} using \textsc{Cycle Contraction} as a subroutine with only a polynomial-time overhead. This led to a polynomial-time algorithm for \textsc{Path Contraction} in planar graphs using a known polynomial-time algorithm for \textsc{Cycle Contraction} in planar graphs as a subroutine. Finally, we gave a fine-grained complexity analysis of \textsc{Path Contraction} in chordal graphs. 

Just as breaking the $\mathcal{O}^*(2^n)$ barrier proved challenging for \textsc{Path Contraction}, we believe that beating the $\mathcal{O}^*(2^k)$ bound would be interesting. Similarly, coming up with an $\mathcal{O}^*(2^k)$-time algorithm for \textsc{Cycle Contraction} is a natural future direction. As such an algorithm would result in solving \textsc{Cycle Contraction} in $\mathcal{O}^*(2^n)$ time, coming up with an $\mathcal{O}^*(2^n)$-time algorithm for contracting to $C_5$ maybe a first step towards this goal. Recall that our algorithm for \textsc{Cycle Contraction} has the worst case running time when the target graph is $C_5$. Finally, determining the longest cycle to which an $H$-free graph (for a fixed $H$) is contractible is another interesting future direction. A similar study on $H$-free graphs in the context of longest paths is known~\cite{KernP20}. Note that assuming \P$\neq$\NP, the complexities of contracting to a longest path and longest cycle do not coincide on $H$-free graphs.\\

\smallskip \noindent \textbf{Acknowledgement. }We thank Roohani Sharma for initial discussions.

%
%
%

\bibliographystyle{splncs04}  
\bibliography{references}

\end{document}